\newtheorem{theorem}{Theorem}
\newtheorem{lemma}{Lemma}
\newtheorem{prop}{Proposition}
\theoremstyle{definition}
\newtheorem{example}{Example}
\newtheorem*{example1}{Example 1 (continued)}
\begin{document}

\title{Necessary and sufficient conditions for  multiple objective 
optimal regression designs}

\author{Lucy L. Gao$^{\circ}$\footnote{Author ordering is alphabetical. Corresponding author: lucy.gao@stat.ubc.ca  }~, Jane J. Ye$^{\ddagger}$, Shangzhi Zeng$^{\ddagger}$, Julie Zhou$^{\ddagger}$ \\~\\
{\small $\circ$ Department of Statistics, University of British Columbia} \\ 
{\small $\ddagger$ Department of Mathematics and Statistics, University of Victoria}  \\
}

\maketitle
\begin{abstract}
We typically construct optimal designs based on a single objective function. To better capture the breadth of an experiment's goals, we could  instead construct a multiple objective optimal design based on multiple objective functions. While algorithms have been developed to find multi-objective optimal designs (e.g. efficiency-constrained and maximin optimal designs), it is far less clear how to verify the optimality of a solution obtained from an algorithm. 
In this paper, we provide theoretical results characterizing optimality for efficiency-constrained and maximin optimal designs on a discrete design space. We demonstrate how to use our results in conjunction with linear programming algorithms to verify optimality. 
\end{abstract}

{\it Keywords:} optimality conditions, efficiency, maximin design, linear programming, convex optimization, robustness
\vfill

\newpage

\section{Introduction}
Consider modelling the output of a designed experiment as:
\begin{equation}
y_i=f({\bf x}_i, \bm{\theta})+\epsilon_i, \quad i=1, \ldots, n,
\label{Model1}
\end{equation}
where $y_i$ is the response variable observed at design point ${\bf x}_i \in S$ for $S \subseteq \mathbb{R}^p$, $\bm{\theta} \in \mathbb{R}^q$ is a vector of unknown regression parameters, and the $\epsilon_i$'s are independent random errors with $\mathbb{E}[\epsilon_i] = 0$ and $\text{Var}(\epsilon_i) = \sigma^2$. An optimal design chooses the values of ${\bf x}_i$ to answer the experimental questions of interest as precisely as possible. This problem is often formulated in terms of a \emph{single-objective} optimal design problem, where optimality is defined with respect to a single summary measure of the information obtained by fitting a single model to the experimental data. For example, for a particular choice of regression function $f(\cdot, \cdot)$ in \eqref{Model1} and estimator $\hat{\bm \theta}$, an A-optimal design minimizes the average variance of $\hat \theta_1, \ldots, \hat \theta_q$. 

However, experimenters have complex goals that cannot be fully captured by a single-objective optimal design criterion. For example, an experimenter may fit a single model to infer and to predict, and there is little overlap between single-objective optimality criteria that measure inferential versus predictive power. Furthermore, inference on different parameters answers different research questions with varying importance to the experimenter (e.g. main effects are more important than interaction terms or vice versa). Single-objective optimal design criteria do not reflect this variation.  Another consideration is that experimenters may be uncertain about the functional form of the relationship between $y_i$ and ${\bf x}_i$. Thus, they may want a design with good inferential or predictive power for multiple models of the form \eqref{Model1}, rather than a single model of the form \eqref{Model1}.

\emph{Multi-objective} optimal designs combine several single-objective optimal design criteria. Common formulations include the \emph{compound} formulation, which optimizes the weighted sum of the criteria for a set of user-specified weights, the \emph{efficiency-constrained} formulation, which optimizes one criterion while requiring the design efficiency with respect to the other criteria to be higher than user-specified values, and the \emph{maximin} formulation, which maximizes  the minimum efficiency across the set of optimality criteria \citep{kee1999recent, wong2022cvx}. We focus on the efficiency-constrained and maximin formulations, as it is difficult to interpret the practical significance of the weights in the compound formulation. 

Many papers have studied algorithms for finding efficiency-constrained and maximin optimal designs \citep{huang1998sequential, imhof2000graphical, cheng2019multiple}. \citet{wong2022cvx} provide a particularly flexible algorithm: they formulate many efficiency-constrained and maximin optimal design problems as convex optimization problems, then apply an off-the-shelf convex optimization solver (\verb+CVX+, \citealt{cvx}). 

In this paper, we formulate efficiency-constrained and maximin problems as convex optimization problems, along the lines of \citet{wong2022cvx}. We then consider how to verify the optimality of an efficiency-constrained or a maximin optimal design obtained from \verb+CVX+. We provide a complete characterization of optimality for efficiency-constrained and maximin efficiency designs on a discrete design space. Related results appear in the literature for efficiency-constrained optimal designs (see e.g. \citealt{cook1994equivalence} and \citealt{clyde1996equivalence}). Our characterization of optimality for minimax efficiency designs seem to be new, though there is related work on minimax and maximin single-objective optimization problems (see e.g. \citealt{muller1998applications} and \citealt{dette2007maximin}). 

Characterizations of optimality for many commonly-used single-objective optimal design criteria (e.g. $D$- and $A$-) require the optimal design to satisfy a set of easily computable inequalities. By contrast, our characterizations of optimality for efficiency-constrained and maximin designs posit the \emph{existence} of a set of quantities that satisfy a set of equalities and inequalities involving the optimal design. These types of results are thought to be impractical for optimality verification, as it is unclear how to efficiently find a suitable set of quantities. Previous work on efficiency-constrained optimal design problems searches for a suitable set of quantities via grid search and bisection search, but then the computational complexity grows exponentially in the number of objective functions \citep{cheng2019multiple}. Remarkably, we are able to overcome this challenge: it turns out that we can find a suitable set of quantities by solving linear programming problems \citep{luenberger1984linear}. Linear programming problems are a cornerstone of mathematical optimization, and off-the-shelf software are available to solve them accurately and efficiently. 

The rest of the paper is organized as follows. In Section 2, we review concepts related to single-objective optimality criteria, including the necessary and sufficient conditions for optimality. In Sections 3 and 4, we describe how to solve efficiency-constrained and maximin optimal designs and how to verify the optimality of the obtained solutions. Several theoretical results are derived. We apply our approach to several examples in Section 5. The conclusion is in Section 6. Proofs are in the Appendix.

\section{Single-objective optimal designs}

We consider a discrete design space $S_N=\{{\bf u}_1, \ldots, {\bf u}_N \} \subseteq S$ with $N$ points, where ${\bf u}_1, \ldots, {\bf u}_N$ and $S$ are user-specified. If $S$ is a continuous design space, then $S_N$ approximates $S$. We denote a design $\xi$ on $S_N$ by
$\xi({\bf w}) = \left(
\begin{array}{cccc} 
{\bf u}_1 & {\bf u}_2 & \cdots & {\bf u}_N \\
w_1 & w_2 & \cdots & w_N 
\end{array} \right),$
where ${\bf w}$ is an $N$-vector with $i$th entry $w_i$ representing the proportion of design points with value ${\bf u}_i$ for $i = 1, 2, \ldots, N$. Let
$ \Omega \equiv \left \{ {\bf w} \in \mathbb{R}^N: \sum \limits_{i=1}^N w_i = 1, w_i \geq 0 \right \}$.

\subsection{Optimality criteria}
Let ${\bf z}_f({\bf x})$ be the $q$-vector with $j$th entry $\frac{\partial f({\bf x}, \bm \theta)}{\partial \theta_j} \Big |_{\bm \theta = \bm \theta^*}$, where $\bm{\theta}^*$ is the true value of $\bm{\theta}$. The asymptotic covariance matrix of the ordinary least squares estimator of $\bm \theta$ in model \eqref{Model1} with regression function $f(\cdot, \cdot)$ at design $\xi({\bf w})$ is proportional to $\mathcal{I}_f^{-1}({\bf w})$, where 
\begin{eqnarray}
\mathcal{I}_f({\bf w}) = \sum \limits_{i=1}^N ~w_i {\bf z}_f({\bf u}_i) {\bf z}_f^T({\bf u}_i)
\label{Info1}
\end{eqnarray}
is the expected information matrix for model \eqref{Model1} with regression function $f(\cdot, \cdot)$ under the assumption of normally distributed errors. If $f({\bf x}, \bm \theta)$ is non-linear in $\bm \theta$, then $\mathcal{I}_f({\bf w})$ may depend on $\bm \theta^*$. If $\mathcal{I}_f({\bf w})$ depends on $\bm \theta^*$, then optimizing design criteria involving $\mathcal{I}_f({\bf w})$ yields locally optimal designs. In practice, $\bm \theta^*$ is typically unknown, so we must replace it with a ``guess" about its value, e.g. an estimate of $\bm \theta$ from a small pilot study. 

Many single-objective optimal design criteria on $S_N$ can be transformed into convex optimization problems of the form $\underset{{\bf w} \in \Omega}{\min} ~ \Phi({\bf w})$, where $\Phi({\bf w}) = \phi(\mathcal{I}_f({\bf w}))$ for a convex function $\phi$ defined on the set of all $q \times q$ positive definite matrices; see e.g. Table \ref{tab:eff}. Note that the function $\Phi({\bf w})$ is convex as a composition of a convex function and a linear function. We measure the quality of a design ${\bf w}$ using its \emph{efficiency} relative to the optimal design, denoted as $\text{Eff}({\bf w})$. 

\begin{table}[H]
\centering
\caption{\label{tab:eff} Single-objective optimality criteria that solve $\underset{{\bf w} \in \Omega}{\min} ~ \Phi({\bf w})$, where $\Phi({\bf w}) = \phi(\mathcal{I}_f({\bf w}))$ for a  convex function $\phi$ defined on the set of all positive definite matrices. We use $\lambda_{min}({\bf M})$ to denote the smallest eigenvalue of ${\bf M}$. } 
\begin{tabular}{l|lllll}
Criteria & $D$- & $A$- & $c$-, for ${\bf c} \in \mathbb{R}^q$ & $L$-, for ${\bf L} \in  \mathbb{R}^{q \times q'}$ & $E$- \\ 
\hline
$\phi({\bf M})$ & $-\log \det({\bf M})$ & $\text{trace}({\bf M}^{-1})$ & ${\bf c}^T {\bf M}^{-1} {\bf c}$ & $\text{trace}({\bf L}^T {\bf M}^{-1} {\bf L})$ & $-\lambda_{min}({\bf M})$ \\ 
$\text{Eff}({\bf w})$ & $ \left ( \frac{\exp \left \{ \underset{{\bf w}' \in \Omega}{\min}  \Phi({\bf w}') \right \} }{\exp \left \{\Phi({\bf w}) \right \}} \right )^{1/q}$ & $\frac{\underset{ {\bf w}' \in \Omega}{\min}  \Phi({\bf w}')}{\Phi({\bf w})}$ & $\frac{\underset{{\bf w}' \in \Omega}{\min}  \Phi({\bf w}')}{\Phi({\bf w})}$ & $\frac{\underset{{\bf w}' \in \Omega}{\min}  \Phi({\bf w}')}{\Phi({\bf w})}$ & $\frac{\Phi({\bf w})}{\underset{{\bf w}'\in \Omega}{\min}  \Phi({\bf w}')}$
\end{tabular}
\end{table}

The MATLAB-based package \verb+CVX+ \citep{cvx} is a user-friendly option for solving a special subclass of convex optimization problems that includes the convex optimization problems described in Table 1; more details on \verb+CVX+ are provided in Section 3.1. The \verb+CVX+ package has previously been applied to solve many single-objective optimal design problems; see e.g. \citet{gao2017d} and \citet{wong2019cvx}.

\subsection{Necessary and sufficient conditions for optimality}
\label{sec:single-cond}
All of the criteria in Table \ref{tab:eff} lead to convex objective functions, but these objective functions are not all differentiable everywhere. For example, the $E$-optimal design criterion leads to a convex objective function that is non-differentiable at designs ${\bf w}$ such that the smallest eigenvalue of $\mathcal{I}_f({\bf w})$ has geometric multiplicity greater than one. Thus, optimality conditions in this setting rely on \emph{subdifferentials}, 
which generalize derivatives to the class of convex functions. 
We denote the subdifferential of a convex function $\Phi:\mathbb{R}^N\rightarrow \mathbb{R}$  at a point ${\bf w} $  as  $\partial \Phi({\bf w})$. 
The following result describes basic properties of subdifferentials (Chapter 2, \citealt{mordukhovich2013easy}).
\begin{lemma} 
\label{lem:sub} 
Suppose that $\Phi$ and $\Phi'$ are two finite-valued convex functions defined on {$\mathbb{R}^N$}. Then, for any ${\bf w} \in \mathbb{R}^N$:
\begin{enumerate} 
\item If $\Phi$ is differentiable at ${\bf w}$, then $\partial \Phi({\bf w}) = \{\nabla \Phi({\bf w})\}$, where $\nabla \Phi({\bf v})$ is the $N$-vector with $i$th entry  $\frac{\partial \Phi}{\partial w_i} \big |_{{\bf w} = {\bf v}}$. 
\item If $a \geq 0$, then $\partial (a \Phi)({\bf w}) = a \partial \Phi({\bf w}) \equiv \{ a{\bf g}: {\bf g} \in \partial \Phi({\bf w})\}$. 
\item $\partial (\Phi + \Phi')({\bf w}) =  \partial \Phi({\bf w}) + \partial \Phi'({\bf w}) \equiv \{ {\bf g} + {\bf g}': {\bf g} \in \partial \Phi({\bf w}), {\bf g'} \in \partial \Phi'({\bf w}) \}. $
\end{enumerate} 
\end{lemma}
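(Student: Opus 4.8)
The plan is to work directly from the defining subgradient inequality: $\mathbf{g} \in \partial \Phi(\mathbf{w})$ precisely when $\Phi(\mathbf{v}) \geq \Phi(\mathbf{w}) + \mathbf{g}^T(\mathbf{v} - \mathbf{w})$ holds for every $\mathbf{v} \in \mathbb{R}^N$. I would first record that, because each function here is finite-valued and convex on all of $\mathbb{R}^N$, every point is interior to its domain, so the subdifferential is nonempty at every $\mathbf{w}$ (a standard consequence of the existence of a supporting hyperplane to the epigraph at an interior point). With the definition in hand, Parts 1 and 2 are short manipulations, while Part 3 carries the real content.

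For Part 2 with $a > 0$, I would simply divide the subgradient inequality for $a\Phi$ by $a$: the inequality $a\Phi(\mathbf{v}) \geq a\Phi(\mathbf{w}) + \mathbf{g}^T(\mathbf{v}-\mathbf{w})$ holds for all $\mathbf{v}$ if and only if $\Phi(\mathbf{v}) \geq \Phi(\mathbf{w}) + (a^{-1}\mathbf{g})^T(\mathbf{v}-\mathbf{w})$ does, so $\mathbf{g} \in \partial(a\Phi)(\mathbf{w})$ iff $a^{-1}\mathbf{g} \in \partial\Phi(\mathbf{w})$; the case $a = 0$ reduces to the observation that the subdifferential of the zero function is $\{\mathbf{0}\}$. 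For Part 1, I would prove the two inclusions of the singleton separately. The inclusion $\nabla\Phi(\mathbf{w}) \in \partial\Phi(\mathbf{w})$ is the first-order characterization of convexity for a differentiable function, which itself follows by applying convexity along the segment from $\mathbf{w}$ to $\mathbf{v}$ and letting the step size vanish. For the reverse, given any $\mathbf{g} \in \partial\Phi(\mathbf{w})$ and any direction $\mathbf{d}$, I would substitute $\mathbf{v} = \mathbf{w} + t\mathbf{d}$ into the subgradient inequality, divide by $t > 0$, and let $t \to 0^+$ to obtain $\nabla\Phi(\mathbf{w})^T\mathbf{d} \geq \mathbf{g}^T\mathbf{d}$; applying this to both $\mathbf{d}$ and $-\mathbf{d}$ forces $\mathbf{g} = \nabla\Phi(\mathbf{w})$.

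For Part 3, the inclusion $\partial\Phi(\mathbf{w}) + \partial\Phi'(\mathbf{w}) \subseteq \partial(\Phi+\Phi')(\mathbf{w})$ is immediate: adding the two subgradient inequalities for $\mathbf{g}$ and $\mathbf{g}'$ yields the subgradient inequality for $\mathbf{g} + \mathbf{g}'$. The reverse inclusion is the Moreau--Rockafellar sum rule, and this is where I expect the main obstacle. Here I would take $\mathbf{g} \in \partial(\Phi + \Phi')(\mathbf{w})$ and must manufacture a splitting $\mathbf{g} = \mathbf{g}_1 + \mathbf{g}_2$ with $\mathbf{g}_1 \in \partial\Phi(\mathbf{w})$ and $\mathbf{g}_2 \in \partial\Phi'(\mathbf{w})$, which cannot be done by elementary inequality manipulation. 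The standard route is a separating-hyperplane argument: one forms two convex sets in $\mathbb{R}^N \times \mathbb{R}$ from the epigraph of $\Phi$ and the reflected, affinely corrected epigraph of $\Phi'$, checks that the hypothesis $\mathbf{g} \in \partial(\Phi+\Phi')(\mathbf{w})$ forces their interiors to be disjoint, separates them by a hyperplane, and reads the desired splitting off the hyperplane's normal direction. The crucial simplification in our setting is that both functions are finite everywhere on $\mathbb{R}^N$, so the constraint qualification needed for the separation (nonempty intersection of the relative interiors of the two domains) holds automatically and no pathological cases arise. Since this is a textbook result, I would alternatively just cite Chapter 2 of \citet{mordukhovich2013easy} for the sum rule and present in full only the elementary Parts 1 and 2 together with the easy inclusion of Part 3.
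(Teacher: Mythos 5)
Your proposal is correct, but it takes a genuinely different route from the paper, because the paper does not prove Lemma~\ref{lem:sub} at all: it presents the three properties as known facts, with the parenthetical citation to Chapter 2 of \citet{mordukhovich2013easy} serving as the entire justification, and reserves its appendix for the results that are actually new (Theorem~\ref{thm:single-extend}, Lemma~\ref{prop:eopt}, Theorems~\ref{Th1} and~\ref{Th2}). Your self-contained argument is sound: Parts 1 and 2 are exactly the standard manipulations of the subgradient inequality, and your opening observation that $\partial \Phi({\bf w})$ is nonempty for a finite-valued convex function on $\mathbb{R}^N$ is precisely the detail needed for the $a = 0$ case of Part 2 to come out as $\{ {\bf 0} \}$ rather than the empty set --- a point that is easy to overlook. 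For Part 3 you correctly isolate the hard direction as the Moreau--Rockafellar sum rule, recognize that it cannot be obtained by elementary inequality manipulation, sketch the right separating-hyperplane argument on epigraphs, and note that the constraint qualification holds automatically here because both effective domains are all of $\mathbb{R}^N$. What your full proof buys is self-containedness; what the paper's citation buys is economy and focus, since Lemma~\ref{lem:sub} is background machinery used only as input to Theorem~\ref{Th1}, Theorem~\ref{Th2}, and the examples. Your closing fallback --- cite the textbook for the sum rule and spell out only the elementary pieces --- is essentially what the paper does, so either version of your write-up is defensible.
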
 

Let ${\bf e}_i$ denote the $N$-vector with $i$th entry equal to 1 and all other entries equal to 0. The following result characterizes optimality for convex single-objective optimal design criteria on a discrete design space. 
\begin{theorem} 
\label{thm:single-extend}
Suppose that $\Phi:\mathbb{R}^N\rightarrow \mathbb{R}$ is a convex function. Let ${\bf w}^* \in \Omega$. Then, ${\bf w}^* \in \underset{{\bf w} \in \Omega}{\arg \min} ~ \Phi({\bf w})$ if and only if
\begin{equation} 
\exists ~ {\bf g} \in \partial \Phi({\bf w}^*) \text{ such that } {\bf g}^T ({\bf w}^* - {\bf e}_i) \leq 0 ~ \text{for all } i = 1, 2, \ldots, N. \label{eq:ineq-extend} 
\end{equation}
\end{theorem}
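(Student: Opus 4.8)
The plan is to recognize $\Omega$ as the probability simplex in $\mathbb{R}^N$, whose extreme points are exactly the vertices ${\bf e}_1, \ldots, {\bf e}_N$, and to reduce the claim to the standard variational characterization of constrained convex minima. Concretely, I will show that \eqref{eq:ineq-extend} is equivalent to the statement that there exists ${\bf g} \in \partial \Phi({\bf w}^*)$ with ${\bf g}^T({\bf w} - {\bf w}^*) \geq 0$ for every ${\bf w} \in \Omega$, and that the latter is in turn equivalent to optimality of ${\bf w}^*$.

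I would handle the easy (sufficiency) direction first. Assume some ${\bf g} \in \partial \Phi({\bf w}^*)$ satisfies ${\bf g}^T({\bf w}^* - {\bf e}_i) \leq 0$ for all $i$. Any ${\bf w} \in \Omega$ can be written as ${\bf w} = \sum_{i=1}^N w_i {\bf e}_i$ with $w_i \geq 0$ and $\sum_i w_i = 1$, so ${\bf g}^T({\bf w} - {\bf w}^*) = \sum_i w_i\, {\bf g}^T({\bf e}_i - {\bf w}^*) \geq 0$. The subgradient inequality then gives $\Phi({\bf w}) \geq \Phi({\bf w}^*) + {\bf g}^T({\bf w} - {\bf w}^*) \geq \Phi({\bf w}^*)$, so ${\bf w}^*$ is a minimizer. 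This same convex-combination computation also shows that \eqref{eq:ineq-extend} and the uniform inequality ``${\bf g}^T({\bf w} - {\bf w}^*) \geq 0$ for all ${\bf w} \in \Omega$'' are equivalent, since a linear functional on $\Omega$ attains its minimum at a vertex.

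For the necessity direction I would apply Fermat's rule to the unconstrained problem $\min_{{\bf w}} \Phi({\bf w}) + \iota_\Omega({\bf w})$, where $\iota_\Omega$ is the indicator function of $\Omega$: the point ${\bf w}^*$ is optimal if and only if $0 \in \partial(\Phi + \iota_\Omega)({\bf w}^*)$. Because $\Phi$ is finite-valued on all of $\mathbb{R}^N$, the constraint qualification $\mathrm{ri}(\mathrm{dom}\,\Phi) \cap \mathrm{ri}(\Omega) \neq \emptyset$ holds trivially, so the subdifferential sum rule yields $0 \in \partial \Phi({\bf w}^*) + N_\Omega({\bf w}^*)$, where $N_\Omega({\bf w}^*) = \partial \iota_\Omega({\bf w}^*)$ is the normal cone to $\Omega$ at ${\bf w}^*$. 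Hence there exists ${\bf g} \in \partial \Phi({\bf w}^*)$ with $-{\bf g} \in N_\Omega({\bf w}^*)$, i.e. ${\bf g}^T({\bf w} - {\bf w}^*) \geq 0$ for all ${\bf w} \in \Omega$; specializing to ${\bf w} = {\bf e}_i$ yields \eqref{eq:ineq-extend}.

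The main obstacle is this necessity direction, and specifically producing a \emph{single} subgradient that certifies optimality against all $N$ vertices simultaneously. Testing each feasible direction ${\bf e}_i - {\bf w}^*$ in isolation only gives $\max_{{\bf g} \in \partial \Phi({\bf w}^*)} {\bf g}^T({\bf e}_i - {\bf w}^*) \geq 0$, that is, a possibly different maximizing subgradient for each $i$, which is not enough. If I prefer to avoid the sum-rule and normal-cone machinery, the alternative is a minimax argument: using that $\partial \Phi({\bf w}^*)$ is nonempty, convex, and compact (as $\Phi$ is finite convex on $\mathbb{R}^N$) and that the directional derivative satisfies $\Phi'({\bf w}^*; {\bf d}) = \max_{{\bf g} \in \partial \Phi({\bf w}^*)} {\bf g}^T {\bf d}$, I would invoke a minimax theorem to interchange the maximization over ${\bf g}$ with the minimization over a mixing distribution on the vertices in $\max_{{\bf g}} \min_i {\bf g}^T({\bf e}_i - {\bf w}^*)$, reducing the existence of a uniform ${\bf g}$ to the per-direction nonnegativity already guaranteed by optimality. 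Either route closes the gap; the normal-cone route is shorter, since $\mathrm{dom}\,\Phi = \mathbb{R}^N$ removes all constraint-qualification concerns.
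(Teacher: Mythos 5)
Your proof is correct and takes essentially the same route as the paper: the paper invokes a cited variational characterization of constrained convex minima (its Proposition A1, from Mordukhovich and Nam, Theorem 4.14), stating that ${\bf w}^* $ is optimal if and only if some single ${\bf g} \in \partial \Phi({\bf w}^*)$ satisfies ${\bf g}^T({\bf w} - {\bf w}^*) \geq 0$ for all ${\bf w} \in \Omega$, and then performs exactly your convex-combination reduction between that condition and the $N$ vertex inequalities. Your Fermat-rule/normal-cone argument simply re-derives that cited proposition (and your remark about needing one uniform subgradient rather than a per-direction subgradient is precisely why the proposition is stated with an existential ${\bf g}$), so the mathematical content coincides.
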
 

The rest of this subsection is devoted to results that help us evaluate condition \eqref{eq:ineq-extend} in the special case where $\Phi({\bf w}) = \phi(\mathcal{I}_f({\bf w}))$ for a convex function $\phi$. First, if $\phi$ is convex and differentiable at $\mathcal{I}_f({\bf w}^*)$, then Lemma \ref{lem:sub} says that $\partial \Phi({\bf w}^*) = \{\nabla \Phi({\bf w}^*)\}$, and condition \eqref{eq:ineq-extend} simplifies to: 
\begin{equation} 
\left [ \nabla \Phi({\bf w}^*) \right ]^T ({\bf w}^* - {\bf e}_i) \leq 0 ~ \text{for all } i = 1, 2, \ldots, N.  \label{eq:ineq-diff}
\end{equation} 
The following result follows from the matrix chain rule (Section 2.8.1 of \citealt{petersen2012matrix}), and characterizes the left-hand-side of \eqref{eq:ineq-diff}.  
\begin{lemma} 
\label{lem:other}
Let ${\bf w}^* \in \Omega$.  If $\Phi({\bf w}) = \phi(\mathcal{I}_f({\bf w}))$ for a convex function $\phi$ and a regression function $f(\cdot, \cdot)$, and $\phi$ is differentiable at $\mathcal{I}_f({\bf w}^*)$, then:
\begin{equation*} 
\left [ \nabla \Phi({\bf w}^*) \right ]^T ({\bf w}^* - {\bf e}_i)  =  d_{\phi, f}({\bf u}_i, {\bf w}^*), \quad \text{for all } i = 1, 2, \ldots, N, 
\end{equation*} 
where for all $i = 1, 2, \ldots, N$, we define 
\begin{equation} 
d_{\phi, f}({\bf u}_i, {\bf w}^*) \equiv \text{trace} \left ( [ \nabla \phi (\mathcal{I}_f({\bf w}^*) )]^T \left [\mathcal{I}_f({\bf w}^*) - {\bf z}_f({\bf u}_i) {\bf z}_f^T({\bf u}_i)  \right ]  \right ), \label{eq:defd}
\end{equation} 
and where $\nabla \phi ({\bf M}^*)$ is the $q \times q$ matrix with $(j, j')th$ entry $\frac{\partial \phi}{\partial M_{jj'}} \big |_{{\bf M} = {\bf M}^*}$. 
\end{lemma}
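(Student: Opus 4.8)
The plan is to compute $\nabla \Phi({\bf w}^*)$ explicitly via the matrix chain rule and then evaluate the inner product $[\nabla \Phi({\bf w}^*)]^T({\bf w}^* - {\bf e}_i)$ directly. The key structural fact is that the map ${\bf w} \mapsto \mathcal{I}_f({\bf w})$ is \emph{linear} by \eqref{Info1}, so each entry of $\mathcal{I}_f({\bf w})$ is an affine function of ${\bf w}$ and the composition $\Phi = \phi \circ \mathcal{I}_f$ is differentiable at ${\bf w}^*$ whenever $\phi$ is differentiable at $\mathcal{I}_f({\bf w}^*)$.

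First I would apply the matrix chain rule (Section 2.8.1 of \citealt{petersen2012matrix}) to obtain, for each $k = 1, \ldots, N$,
\begin{equation*}
\frac{\partial \Phi}{\partial w_k}\bigg|_{{\bf w} = {\bf w}^*} = \text{trace}\left( [\nabla \phi(\mathcal{I}_f({\bf w}^*))]^T \, \frac{\partial \mathcal{I}_f({\bf w})}{\partial w_k}\bigg|_{{\bf w} = {\bf w}^*} \right).
\end{equation*}
Because $\mathcal{I}_f({\bf w}) = \sum_{i=1}^N w_i {\bf z}_f({\bf u}_i){\bf z}_f^T({\bf u}_i)$ is linear in ${\bf w}$, the inner derivative is simply $\partial \mathcal{I}_f({\bf w})/\partial w_k = {\bf z}_f({\bf u}_k){\bf z}_f^T({\bf u}_k)$, independent of ${\bf w}$. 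Hence the $k$th entry of $\nabla \Phi({\bf w}^*)$ equals $\text{trace}([\nabla \phi(\mathcal{I}_f({\bf w}^*))]^T {\bf z}_f({\bf u}_k){\bf z}_f^T({\bf u}_k))$.

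Next I would assemble the inner product. Writing ${\bf G} \equiv \nabla \phi(\mathcal{I}_f({\bf w}^*))$ for brevity, the term involving ${\bf w}^*$ gives, by linearity of the trace together with the definition \eqref{Info1},
\begin{equation*}
[\nabla \Phi({\bf w}^*)]^T {\bf w}^* = \sum_{k=1}^N w_k^* \, \text{trace}\left({\bf G}^T {\bf z}_f({\bf u}_k){\bf z}_f^T({\bf u}_k)\right) = \text{trace}\left({\bf G}^T \mathcal{I}_f({\bf w}^*)\right),
\end{equation*}
while the term involving ${\bf e}_i$ simply selects the $i$th entry, $[\nabla \Phi({\bf w}^*)]^T {\bf e}_i = \text{trace}({\bf G}^T {\bf z}_f({\bf u}_i){\bf z}_f^T({\bf u}_i))$. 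Subtracting these and recombining via linearity of the trace yields exactly $\text{trace}({\bf G}^T[\mathcal{I}_f({\bf w}^*) - {\bf z}_f({\bf u}_i){\bf z}_f^T({\bf u}_i)])$, which is the definition \eqref{eq:defd} of $d_{\phi,f}({\bf u}_i, {\bf w}^*)$.

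The argument is essentially a bookkeeping exercise, so I do not anticipate a serious obstacle. The only points requiring care are the precise form of the matrix chain rule — namely that the outer derivative enters as the matrix $\nabla \phi$ of partial derivatives $\partial \phi / \partial M_{jj'}$ contracted against the inner derivative through the trace — and verifying that the differentiability hypothesis on $\phi$ at $\mathcal{I}_f({\bf w}^*)$ legitimately transfers, through the linear map $\mathcal{I}_f$, to differentiability of the composite $\Phi$ at ${\bf w}^*$ so that $\nabla \Phi({\bf w}^*)$ is well defined.
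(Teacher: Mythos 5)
Your proposal is correct and takes essentially the same route as the paper: the paper's entire justification for this lemma is the remark that it ``follows from the matrix chain rule (Section 2.8.1 of Petersen and Pedersen, 2012),'' and your argument is exactly that chain-rule computation carried out explicitly, using linearity of $\mathcal{I}_f$ to get $\partial \mathcal{I}_f/\partial w_k = {\bf z}_f({\bf u}_k){\bf z}_f^T({\bf u}_k)$ and linearity of the trace to assemble $d_{\phi,f}({\bf u}_i,{\bf w}^*)$. You have simply filled in the bookkeeping steps the paper leaves implicit, including the (correct) observation that differentiability of $\phi$ at $\mathcal{I}_f({\bf w}^*)$ transfers to $\Phi$ at ${\bf w}^*$ because $\mathcal{I}_f$ is linear.
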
 
It follows from Theorem \ref{thm:single-extend} and Lemma \ref{lem:other} that characterizing optimality for single-objective optimality criteria with $\Phi({\bf w}) = \phi(\mathcal{I}_f({\bf w}))$ for a differentiable convex function $\phi$ amounts to checking if $d_{\phi, f}({\bf u}_i, {\bf w}^*) \leq 0$ for all $i = 1, 2, \ldots, N$. Furthermore, $d_{\phi, f}({\bf u}_i, {\bf w}^*)$ is straightforward to compute given the formula for $\nabla \phi ({\bf M}^*)$; see Table \ref{tab:phi}. 
\begin{table}[H]
\centering
\caption{\label{tab:phi} Optimality criteria that solve $\underset{{\bf w} \in \Omega}{\min} ~ \phi({\bf A}_f({\bf w}))$ for all of the differentiable convex functions $\phi$ in Table \ref{tab:eff}. Formulas for $\nabla \phi ({\bf M})$ are from \citet{petersen2012matrix}. } 

\begin{tabular}{l|llll}
Criteria & $D$- & $A$- & $c$-, for ${\bf c} \in \mathbb{R}^q$ & $L$-, for ${\bf L} \in  \mathbb{R}^{q \times q'}$ \\ 
\hline
$\phi({\bf M})$ & $-\log \det({\bf M})$ & $\text{trace}({\bf M}^{-1})$ & ${\bf c}^T {\bf M}^{-1} {\bf c}$ & $\text{trace}({\bf L}^T {\bf M}^{-1} {\bf L})$ \\
$\nabla \phi ({\bf M})$ & $-{\bf M}^{-1}$ & $-{\bf M}^{-2}$ & $-{\bf M}^{-1} {\bf c} {\bf c}^T {\bf M}^{-1}$ & $-{\bf M}^{-1} {\bf L} {\bf L}^T {\bf M}^{-1}$ 
\end{tabular}
\end{table}
Combining Theorem \ref{thm:single-extend}, Lemma \ref{lem:other}, and Table \ref{tab:eff} yields various classical equivalence theorems on a discrete design space; see e.g. \citet{kiefer1974general}. 

In the case of $E$-optimality, we have $\Phi({\bf w}) = \phi(\mathcal{I}_f({\bf w}))$ for a \emph{non-differentiable} convex function $\phi$ (Table \ref{tab:eff}), and so Lemma \ref{lem:other} does not always apply. We address this issue with the following result. 
\begin{lemma} 
\label{prop:eopt}
Suppose that $\Phi({\bf w}) = -\lambda_{min}(\mathcal{I}_f({\bf w}))$ for a regression function $f(\cdot, \cdot)$. Let ${\bf w}^* \in \Omega$ and let $r^*$ be the geometric multiplicity of $\lambda_{min}(\mathcal{I}_f({\bf w}^*))$. 
\begin{enumerate} 
\item If $r^* = 1$, then $\partial \Phi({\bf w}^*) = \{ \nabla \Phi({\bf w}^*) \}$, and 
$$ [\nabla \Phi({\bf w}^*)]^T ( {\bf w}^* - {\bf e}_i) =  [({\bf v}^*)^T  {\bf z}_f({\bf u}_i) ]^2 -  \lambda_{min}(\mathcal{I}_f({\bf w}^*)), \text{ for all } i = 1, 2, \ldots, N, $$
where ${\bf v}^*$ denotes an arbitrary unit eigenvector associated with $\lambda_{min}(\mathcal{I}_f({\bf w}^*))$.
\item If $r^* > 1$, then for any ${\bf g} \in \partial \Phi({\bf w}^*)$, there exist $a_1, \ldots, a_{r^*} \geq 0$ such that $\sum_{j=1}^{r^*} a_j = 1$ and 
$${\bf g}^T ( {\bf w}^* - {\bf e}_i) = d_{-\lambda_{min}, f, {\bf a}}({\bf u}_i, {\bf w}^*), \text{ for all } i = 1, 2, \ldots, N,$$ 
where for all $ i = 1, 2, \ldots, N$ we define
\begin{equation} d_{-\lambda_{min}, f, {\bf a}}({\bf u}_i, {\bf w}^*) \equiv \sum \limits_{j=1}^{r^*} a_j   [({\bf v}_j^*)^T  {\bf z}_f({\bf u}_i) ]^2 -  \lambda_{min}(\mathcal{I}_f({\bf w}^*)),\label{eq:defde}
\end{equation} 
where ${\bf v}_1^*, \ldots, {\bf v}_{r^*}^*$ denotes an arbitrary set of orthonormal eigenvectors associated with $\lambda_{min}(\mathcal{I}_f({\bf w}^*))$. 
\end{enumerate} 
\end{lemma}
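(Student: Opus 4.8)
The plan is to exploit the composite structure $\Phi = \phi \circ L$, where $\phi({\bf M}) = -\lambda_{min}({\bf M})$ is a finite convex function on the space of $q \times q$ symmetric matrices and $L:\mathbb{R}^N \to \mathbb{R}^{q\times q}$ is the linear map ${\bf w} \mapsto \mathcal{I}_f({\bf w}) = \sum_i w_i {\bf z}_f({\bf u}_i){\bf z}_f^T({\bf u}_i)$. Because $L$ is linear and $\phi$ is finite-valued and convex, the subdifferential chain rule gives $\partial\Phi({\bf w}^*) = L^*(\partial\phi(\mathcal{I}_f({\bf w}^*)))$, where $L^*$ is the adjoint of $L$ for the trace inner product. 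From $\langle L({\bf w}), {\bf G}\rangle = \sum_i w_i\,{\bf z}_f^T({\bf u}_i){\bf G}\,{\bf z}_f({\bf u}_i)$ I read off that $L^*({\bf G})$ is the $N$-vector with $i$th entry ${\bf z}_f^T({\bf u}_i){\bf G}\,{\bf z}_f({\bf u}_i)$. Hence every ${\bf g}\in\partial\Phi({\bf w}^*)$ has $g_i = {\bf z}_f^T({\bf u}_i){\bf G}\,{\bf z}_f({\bf u}_i)$ for some ${\bf G}\in\partial\phi(\mathcal{I}_f({\bf w}^*))$.

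The second step is to identify $\partial\phi(\mathcal{I}_f({\bf w}^*))$. Writing $\lambda_{min}({\bf M}) = \min_{\|{\bf v}\|=1}{\bf v}^T{\bf M}{\bf v}$ displays $\lambda_{min}$ as a pointwise minimum of linear functions of ${\bf M}$, hence concave, and a Danskin-type argument identifies its superdifferential as the convex hull of $\{{\bf v}{\bf v}^T\}$ over the unit eigenvectors ${\bf v}$ for $\lambda_{min}({\bf M})$. Stacking an orthonormal basis of the $\lambda_{min}$-eigenspace into a matrix ${\bf V}$, and using that the convex hull of $\{{\bf c}{\bf c}^T:\|{\bf c}\|=1\}$ is the set of density matrices, this convex hull equals $\{{\bf V}{\bf A}{\bf V}^T:{\bf A}\succeq 0,\ \text{trace}({\bf A})=1\}$. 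Therefore $\partial\phi(\mathcal{I}_f({\bf w}^*)) = \{-{\bf V}{\bf A}{\bf V}^T:{\bf A}\succeq 0,\ \text{trace}({\bf A})=1\}$, so each ${\bf g}$ corresponds to such an ${\bf A}$.

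I would then compute the core identity ${\bf g}^T({\bf w}^*-{\bf e}_i)=\sum_k g_k w_k^* - g_i$ term by term. For the first term, $\sum_k w_k^*\,{\bf z}_f^T({\bf u}_k)(-{\bf V}{\bf A}{\bf V}^T){\bf z}_f({\bf u}_k) = -\text{trace}({\bf V}{\bf A}{\bf V}^T\mathcal{I}_f({\bf w}^*))$; since the columns of ${\bf V}$ are eigenvectors, ${\bf V}^T\mathcal{I}_f({\bf w}^*){\bf V}=\lambda_{min}(\mathcal{I}_f({\bf w}^*)){\bf I}$, which with $\text{trace}({\bf A})=1$ collapses this to $-\lambda_{min}(\mathcal{I}_f({\bf w}^*))$. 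For the second term, $-g_i = {\bf b}_i^T{\bf A}\,{\bf b}_i$ with ${\bf b}_i\equiv{\bf V}^T{\bf z}_f({\bf u}_i)$. This gives the identity ${\bf g}^T({\bf w}^*-{\bf e}_i)={\bf b}_i^T{\bf A}\,{\bf b}_i - \lambda_{min}(\mathcal{I}_f({\bf w}^*))$ for all $i$.

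The two cases then follow. If $r^*=1$, then ${\bf A}\succeq 0$ and $\text{trace}({\bf A})=1$ force the scalar ${\bf A}=1$, so $\partial\Phi({\bf w}^*)$ is a singleton; a finite convex function with a singleton subdifferential is differentiable there, giving $\partial\Phi({\bf w}^*)=\{\nabla\Phi({\bf w}^*)\}$, and ${\bf b}_i^T{\bf A}\,{\bf b}_i=[({\bf v}^*)^T{\bf z}_f({\bf u}_i)]^2$ produces the stated formula. If $r^*>1$, I would diagonalize ${\bf A}={\bf U}\,\text{diag}(a_1,\ldots,a_{r^*})\,{\bf U}^T$ with $a_j\geq 0$ and $\sum_j a_j=1$, and replace the eigenbasis by the columns of ${\bf V}{\bf U}$, which again form an orthonormal basis of the same eigenspace; then ${\bf b}_i^T{\bf A}\,{\bf b}_i=\sum_{j=1}^{r^*}a_j[({\bf v}_j^*)^T{\bf z}_f({\bf u}_i)]^2$, so ${\bf g}^T({\bf w}^*-{\bf e}_i)=d_{-\lambda_{min},f,{\bf a}}({\bf u}_i,{\bf w}^*)$. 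I expect the main obstacle to be establishing the superdifferential of $\lambda_{min}$ rigorously, and recognizing in the $r^*>1$ case that the orthonormal eigenbasis appearing in $d$ must be chosen to diagonalize the density matrix ${\bf A}$ attached to ${\bf g}$, since this is precisely what annihilates the cross terms in ${\bf b}_i^T{\bf A}\,{\bf b}_i$.
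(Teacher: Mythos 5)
Your proposal is correct, and it reaches the lemma by a genuinely different route than the paper. The paper works with the \emph{concave} function $(-\Phi)({\bf w})=\lambda_{min}(\mathcal{I}_f({\bf w}))$ and uses Clarke's nonsmooth calculus: the Clarke chain rule (Theorem 2.3.10 of Clarke, 1983) applied to the composition with the linear map $\mathcal{I}_f$, Corollary 10 of Lewis (1999) for $\partial^c\lambda_{min}$, and then the identities $\partial\Phi({\bf w}^*)=\partial^c\Phi({\bf w}^*)=-\partial^c(-\Phi)({\bf w}^*)$ to convert back to the convex subdifferential. You stay entirely inside convex analysis: the subdifferential chain rule for a finite convex function composed with a linear map, a Danskin-type max-formula for the superdifferential of $\lambda_{min}({\bf M})=\min_{\|{\bf v}\|=1}{\bf v}^T{\bf M}{\bf v}$, and the identification of the resulting convex hull with the density matrices ${\bf V}{\bf A}{\bf V}^T$, ${\bf A}\succeq 0$, $\mathrm{trace}({\bf A})=1$, supported on the eigenspace; the terminal computation ${\bf g}^T({\bf w}^*-{\bf e}_i)={\bf b}_i^T{\bf A}{\bf b}_i-\lambda_{min}(\mathcal{I}_f({\bf w}^*))$ is then the same as the paper's. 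Your route buys two things the paper's write-up leaves implicit. First, you explicitly prove the $r^*=1$ claim $\partial\Phi({\bf w}^*)=\{\nabla\Phi({\bf w}^*)\}$ via the fact that a finite convex function with a singleton subdifferential is differentiable there; the paper never addresses this part of the statement. Second, and more substantively, your diagonalization step ${\bf A}={\bf U}\,\mathrm{diag}(a_1,\ldots,a_{r^*})\,{\bf U}^T$ with the basis ${\bf V}$ replaced by ${\bf V}{\bf U}$ repairs an imprecision in the paper: the paper's displayed formula for $\partial^c\lambda_{min}(\mathcal{I}_f({\bf w}^*))$ uses convex combinations $\sum_j a_j{\bf v}_j^*[{\bf v}_j^*]^T$ over a \emph{fixed} orthonormal basis, which describes only the matrices diagonal in that basis and misses the cross terms present in a general element of the convex hull when $r^*>1$. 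Your observation that the orthonormal eigenbasis appearing in \eqref{eq:defde} must be chosen (per ${\bf g}$) to diagonalize the attached density matrix is exactly the point needed to make the lemma's ``arbitrary set of orthonormal eigenvectors'' phrasing rigorous. What the paper's approach buys in exchange is brevity (the eigenvalue subdifferential is quoted rather than derived) and machinery that extends beyond the convex setting.
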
 
Combining Theorem \ref{thm:single-extend} with Lemma \ref{prop:eopt} yields the equivalence theorem for $E$-optimality on a discrete design space \citep{kiefer1974general}.

\section{Efficiency-constrained optimal designs}

Suppose that an experimenter is primarily interested in optimizing one particular single-objective optimality criterion $\Phi_1$ without losing too much efficiency with respect to other criteria $\Phi_2, \ldots, \Phi_K$, for $K \geq 2$. Let $\text{Eff}_k({\bf w})$ denote the efficiency of the design $\xi({\bf w})$ with respect to criterion $\Phi_k$, for $k = 1, \ldots, K$. Given experimenter-specified constants $m_2, \ldots, m_K \in (0, 1)$, an \emph{efficiency-constrained optimal design} on $S_N$ solves:
\begin{eqnarray}
\left\{ \begin{array}{l}
\underset{\bf w \in \mathbb{R}^N}{\min} ~~~~~~~~~~~~~\Phi_1({\bf w}) \\
\mbox{subject to:}  
~~~ ~ \text{Eff}_k({\bf w})  \ge m_k, ~k=2, \ldots, K, 
\\ ~~~~~~~~~~~~~~
~~~~ \sum \limits_{i=1}^N w_i = 1,  w_i \geq 0, ~ i = 1, 2, \ldots, N. 
\end{array}
\right \}.
\label{EffProb1}
\end{eqnarray}
All designs may fail to satisfy the constraints in (\ref{EffProb1}) when the desired minimum efficiencies $m_2, \ldots, m_K$ are large. 

\subsection{Convex optimization problem} 

Suppose that $\Phi_k({\bf w}) = \phi_k(\mathcal{I}_{f_k}({\bf w}))$ for all $k = 1, \ldots, K,$
where $\phi_1, \ldots, \phi_K$ are continuous convex functions chosen from Table \ref{tab:eff}, and $\mathcal{I}_{f_k}({\bf w})$ in \eqref{Info1} is the expected information matrix for a model of the form \eqref{Model1} with regression function $f_k({\bf x}, {\bm \theta})$ for $\bm \theta \in \mathbb{R}^{q_k}$ at design $\xi({\bf w})$. We use the definitions of $\text{Eff}_k({\bf w})$ in Table \ref{tab:eff} to  rewrite \eqref{EffProb1} as the following problem:
\begin{eqnarray}
\left\{ \begin{array}{l}
\underset{\bf w \in \mathbb{R}^N}{\min} ~~~~~~~~~~~~~\Phi_1({\bf w}) \\
\mbox{subject to:} ~~~~~\Phi_k({\bf w}) \le h_k(m_k), ~k=2, \ldots, K, \\ 
~~~~~~~~~~~~~~~~~~ \sum \limits_{i=1}^N w_i = 1, ~ w_i \geq 0, ~ i = 1, 2, \ldots, N. 
\end{array}
\right \}, 
\label{EffProb2}
\end{eqnarray} 
where we define $h_k(m)$ as
\begin{eqnarray} 
h_k(m) = \begin{cases} \left ( \underset{{\bf w}' \in \Omega}{\min}~ \Phi_k({\bf w}') \right ) - q_k \log(m), &\text{ if } \Phi_k({\bf w}) = -\log\det(\mathcal{I}_{f_k}({\bf w})),\\
m \left ( \underset{{\bf w}'\in \Omega}{\min} ~ \Phi_k({\bf w}') \right ), & \text{ if } \Phi_k({\bf w}) = -\lambda_{min}(\mathcal{I}_{f_k}({\bf w})), \\
\frac{1}{m} \left ( \underset{{\bf w}'\in \Omega}{\min} ~ \Phi_k({\bf w}') \right ), & \text{ otherwise}. 
\end{cases} \label{eq:defh}
\end{eqnarray} 
For all $k = 1, 2, \ldots, K$, $\Phi_k({\bf w}) =  \phi_k(\mathcal{I}_{f_k}({\bf w}))$ is a convex function, because $\phi_k$ is a convex function and $\mathcal{I}_{f_k}$ is a linear function.  Thus, \eqref{EffProb2} is a convex optimization problem. Note that our formulation differs from Wong and Zhou (2022) because we use $\Phi_k({\bf w})= - \log \det (\mathcal{I}_{f_k}({\bf w}))$ for $D$-optimality rather than $\Phi_k({\bf w}) = \left [ \det (\mathcal{I}_{f_k}({\bf w}) \right ]^{-1/q_k}$. 

In fact, \eqref{EffProb2} is a convex optimization problem that can be solved by \verb+CVX+  \citep{cvx}, a MATLAB-based package that works with a special subclass of optimization problems; see \citet{grant2008graph} for details on this subclass. \verb+CVX+ automatically converts \eqref{EffProb2} to a form solvable by a numerical convex optimization solver (e.g. SDPT3 or SeDuMi), then translates the numerical results back to the original form.

\subsection{Necessary and sufficient conditions}
The following result characterizes optimality for (\ref{EffProb2}), under the assumption that the minimum efficiency inequality constraints can be strictly satisfied.
\begin{theorem}
\label{Th1}
Suppose that there exists ${\bf w} \in \Omega$ satisfying $\text{Eff}_k({\bf w}) > m_k$ for all $k=2, \ldots, K$. Let ${\bf w}^*$ be a feasible solution for problem \eqref{EffProb2}. Then, ${\bf w}^*$ solves problem (\ref{EffProb2}) if and only if there exists $\eta_2, \ldots, \eta_K \geq 0$ such that 
\begin{enumerate}
\item  $\eta_k \left ( \Phi_k( {\bf w}^*) - h_k(m_k) \right ) = 0$ for all $k = 2, \ldots, K$, and 
\item $ {\bf w}^* \in \underset{{\bf w} \in \Omega}{\arg \min} ~  \left [ \Phi_1({\bf w})  + \sum \limits_{k=2}^K \eta_k \Phi_k({\bf w}) \right ]$. 
\end{enumerate} 
\end{theorem}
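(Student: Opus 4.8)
The plan is to recognize Theorem \ref{Th1} as an instance of the Karush--Kuhn--Tucker (KKT) conditions for the convex program \eqref{EffProb2}, and to verify that these conditions are both necessary and sufficient. Since \eqref{EffProb2} minimizes a convex objective $\Phi_1$ over the intersection of the simplex $\Omega$ with the convex inequality constraints $\Phi_k({\bf w}) \le h_k(m_k)$ for $k = 2, \ldots, K$, the natural strategy is to form the Lagrangian, introduce multipliers $\eta_2, \ldots, \eta_K \ge 0$ for the efficiency constraints, and treat the simplex constraint separately by keeping the minimization over $\Omega$ inside the stationarity condition. The hypothesis that some ${\bf w} \in \Omega$ satisfies $\text{Eff}_k({\bf w}) > m_k$ strictly for all $k$ is precisely Slater's condition, which guarantees strong duality and the existence of optimal multipliers.

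First I would handle sufficiency, which is the more elementary direction. Assuming the existence of $\eta_2, \ldots, \eta_K \ge 0$ satisfying the complementary slackness condition (item 1) and the stationarity/minimization condition (item 2), I would take an arbitrary feasible ${\bf w} \in \Omega$ with $\Phi_k({\bf w}) \le h_k(m_k)$ and show $\Phi_1({\bf w}) \ge \Phi_1({\bf w}^*)$. The key chain of inequalities is
\begin{equation*}
\Phi_1({\bf w}^*) + \sum_{k=2}^K \eta_k \Phi_k({\bf w}^*) \le \Phi_1({\bf w}) + \sum_{k=2}^K \eta_k \Phi_k({\bf w}),
\end{equation*}
which follows from item 2. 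Using complementary slackness (item 1) to replace $\eta_k \Phi_k({\bf w}^*)$ by $\eta_k h_k(m_k)$ on the left, and using $\eta_k \ge 0$ together with $\Phi_k({\bf w}) \le h_k(m_k)$ to bound the sum on the right from above by $\sum_{k=2}^K \eta_k h_k(m_k)$, the multiplier terms cancel and leave $\Phi_1({\bf w}^*) \le \Phi_1({\bf w})$, as required.

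For necessity, I would apply the KKT theorem for convex programs under Slater's condition. The main work is translating the abstract first-order optimality condition into the specific form of item 2. Because the constraint functions $\Phi_k$ may be non-differentiable (e.g. for $E$-optimality) and the feasible set includes the simplex $\Omega$, I would phrase the argument in terms of subdifferentials and the optimality characterization of Theorem \ref{thm:single-extend}. Specifically, Slater's condition yields multipliers $\eta_k \ge 0$ such that ${\bf w}^*$ minimizes the Lagrangian $\Phi_1({\bf w}) + \sum_{k=2}^K \eta_k \Phi_k({\bf w})$ over $\Omega$, which is exactly item 2, while complementary slackness gives item 1.

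The main obstacle I anticipate is the careful treatment of constraint qualification and non-smoothness. Because the efficiency constraints can involve non-differentiable criteria, I cannot simply invoke a gradient-based KKT statement; instead I would rely on a subdifferential form of the KKT conditions for convex programs, together with the sum rule for subdifferentials recorded in Lemma \ref{lem:sub}, to ensure that the stationarity condition of the Lagrangian decouples correctly and matches the minimization statement in item 2. Verifying that Slater's condition (the strict feasibility hypothesis) indeed delivers the required multipliers, and that the minimization over $\Omega$ is preserved rather than replaced by an unconstrained stationarity equation, is the delicate point; once that is in place, combining it with complementary slackness completes the proof.
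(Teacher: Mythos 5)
Your proposal is correct and takes essentially the same route as the paper: the paper's proof likewise observes that the strict-feasibility hypothesis is Slater's condition and then applies its subdifferential KKT result (Proposition \ref{prop:general-constrained}, proved via the sum rule of Lemma \ref{lem:sub} and the normal-cone characterization) with constraint functions $\Phi_k({\bf w}) - h_k(m_k)$ and the simplex $\Omega$ kept as the abstract constraint set, finally dropping the constant $\sum_{k=2}^K \eta_k h_k(m_k)$ to obtain Condition 2. Your explicit inequality-chain argument for sufficiency is a self-contained extra (showing that direction needs no constraint qualification), but it does not alter the substance of the approach.
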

Theorem \ref{Th1} is related to results in \citet{lee1988}, \citet{cook1994equivalence}, and \citet{clyde1996equivalence}. 

We will now discuss how to use the results in Section \ref{sec:single-cond} to rewrite Theorem \ref{Th1}. First, suppose that $\phi_1, \ldots, \phi_K$ all correspond to $D$-, $A$-, $c$-, or $L$-optimality. Then, it follows from Theorem \ref{thm:single-extend} and Lemma \ref{lem:other} that we can replace Condition 2 in Theorem \ref{Th1} with:
\begin{equation} 
d_{\phi_1, f_1}({\bf u}_i, {\bf w}^*) + \sum \limits_{k=2}^K \eta_k d_{\phi_K, f_K}({\bf u}_i, {\bf w}^*) \leq 0, \text{ for all } i = 1, 2, \ldots, N,  \label{eq:effd}
\end{equation}
for $d_{\phi, f}({\bf u}_i, {\bf w}^*)$ defined in \eqref{eq:defd}. Table \ref{tab:phi} provides formulas for $\nabla \phi_k({\bf M})$.

Otherwise, we must apply Theorem \ref{thm:single-extend} with Lemma \ref{lem:sub} and Lemma \ref{prop:eopt} to rewrite Condition 2 in Theorem \ref{Th1}. We will illustrate via example. 

\begin{example} 

Suppose that $\phi_1({\bf M}) = -\lambda_{min}({\bf M})$, and $\phi_2, \ldots, \phi_K$ all correspond to $D$-, $A$-, $c$-, or $L$-optimality. Then, it follows from Theorem \ref{thm:single-extend} and Lemma \ref{lem:sub} that Condition 2 in Theorem \ref{Th1} holds if and only if  there exists ${\bf g} \in \partial \Phi_1({\bf w}^*)$ such that
\begin{equation} {\bf g}^T \left ( {\bf w}^* - {\bf e}_i \right ) + \sum \limits_{k=2}^K \eta_k d_{\phi_k, f_k}({\bf u}_i, {\bf w}^*) \leq 0 \text{ for all } i = 1, 2, \ldots, N, \label{eq:Econd0}
\end{equation} 
for $d_{\phi, f}({\bf u}_i, {\bf w}^*)$ defined in \eqref{eq:defd}.  Let $r^*$ be the geometric multiplicity of $\lambda_{min}(\mathcal{I}_{f_1}({\bf w}^*))$. It further follows from Lemma \ref{prop:eopt} that: 
\begin{itemize} 
\item \underline{Case 1 ($r^* = 1)$}: Condition 2 in Theorem \ref{Th1} holds if and only if 
\begin{equation} 
[({\bf v^*})^T {\bf z}_{f_1}({\bf u}_i, {\bf w}^*) ]^2 - \lambda_{min}(\mathcal{I}_{f_1}({\bf w}^*)) + \sum \limits_{k=2}^K \eta_k d_{\phi_k, f_k}({\bf u}_i, {\bf w}^*) \leq 0, \forall ~ i = 1, \ldots, N,  \label{eq:Econdspecial} 
\end{equation} 
where ${\bf v^*}$ denotes an arbitrary unit eigenvector of $\lambda_{min}(\mathcal{I}_{f_1}({\bf w}^*))$. 
\item \underline{Case 2 ($r^* > 1)$}: Condition 2 in Theorem \ref{Th1} holds if and only if  there exists $a_1, \ldots, a_{r^*} \geq 0$ such that $\sum_{j=1}^{r^*} a_j = 1$ and 
\begin{equation} 
d_{-\lambda_{min}, f_1, {\bf a}}({\bf u}_i, {\bf w}^*) + \sum \limits_{k=2}^K \eta_k d_{\phi_k, f_k}({\bf u}_i, {\bf w}^*) \leq 0, ~ \forall ~ i = 1, \ldots, N, \label{eq:Econd1}
\end{equation} 
where $d_{-\lambda_{min}, f, {\bf a}}({\bf u}_i, {\bf w}^*)$ is defined in \eqref{eq:defde}. 
\end{itemize} 
\end{example}

\subsection{Optimality verification via linear programming}
Suppose that we have obtained ${\bf w}^*$ by solving \eqref{EffProb2} numerically, where $\phi_1, \ldots, \phi_K$ all correspond to $D$-, $A$-, $c$-, or $L$-optimality. We know from Theorem \ref{Th1} that ${\bf w}^*$ is optimal if we can find $\eta_2, \ldots, \eta_K \geq 0$ such that Conditions 1--2 in Theorem \ref{Th1} are satisfied. However, ${\bf w}^*$ is an approximate numerical solution, and is thus unlikely to  satisfy Conditions 1--2 exactly. Instead, we check whether ${\bf w}^*$ is ``close enough" to optimal by searching for $\eta_2, \ldots, \eta_K \geq 0$ such that: 
\begin{gather}  
\eta_k \left ( \Phi_k( {\bf w}^*) - h_k(m_k) \right ) \leq \delta, \quad k = 2, \ldots, K, \label{eq:Cond1a} \\ 
-\eta_k \left ( \Phi_k( {\bf w}^*) - h_k(m_k) \right ) \leq \delta, \quad k = 2, \ldots, K, \label{eq:Cond1b} \\ 
d_{\phi_1, f_1}({\bf u}_i, {\bf w}^*) + \sum \limits_{k=2}^K \eta_k d_{\phi_k, f_k}({\bf u}_i, {\bf w}^*) \leq \delta, \quad i = 1, 2, \ldots, N, \label{eq:Cond2}
\end{gather} 
where $\delta$ is a small positive constant (e.g. $\delta = 10^{-4}$). Here, \eqref{eq:Cond1a}--\eqref{eq:Cond1b} relax Condition 1 in Theorem \ref{Th1}, and \eqref{eq:Cond2} relax Condition 2 in Theorem \ref{Th1}, and $\delta$ controls our definition of ``close enough" to optimal. Similar ideas appear in single-objective optimal designs \citep{wong2019cvx}.

We propose solving the following optimization problem: 
\begin{eqnarray}
\left\{ \begin{array}{l}
\underset{\bm \eta \in \mathbb{R}^{K-1}}{\min} ~~~ ~{\bf 1}_{K-1}^T \bm \eta \\
\mbox{subject to:} ~{\bm \eta} \ge \bm 0_{K-1}, 
~~{\bf B}_1^T  {\bm \eta} \le {\bf b}_1, 
~~{\bf C}_1 {\bm \eta} \le \delta {\bf 1}_{K-1},  -{\bf C}_1 {\bm \eta} \le \delta {\bf 1}_{K-1}.
\end{array}
\right \},
\label{LProg1}
\end{eqnarray} 
where $\le$ and $\ge$ denote component-wise inequality,   ${\bf 1}_{K-1}$ is the $(K-1)$-vector with every entry equal to 1,  ${\bf B}_1$ is the $(K-1) \times N$ matrix with $(k, i)$th entry $d_{\phi_{k+1}, f_{k+1}}({\bf u}_i, {\bf w}^*)$ where $d_{\phi, f}({\bf u}_i, {\bf w}^*)$ is defined in \eqref{eq:defd} for $k = 1, 2, \ldots, K-1$, ${\bf b}_1$ is the $N$-vector with $i$th entry equal to $\delta - d_{\phi_1, f_1}({\bf u}_i, {\bf w}^*)$, and ${\bf C}_1 = \text{diag}(\Phi_2({\bf w}^*) - h_2(m_2), \ldots, \Phi_K({\bf w}^*) - h_K(m_K)).$

If we are able to find a solution $\bm \eta^*$ to \eqref{LProg1}, then we know that ${\bf w}^*$ and $\bm \eta^*$ jointly satisfy \eqref{eq:Cond1a}--\eqref{eq:Cond2}. This would mean that the conditions in Theorem \ref{Th1} (approximately) hold, therefore ${\bf w}^*$ is indeed optimal for \eqref{EffProb2}. Furthermore, \eqref{LProg1} is a \emph{linear programming problem} \citep{luenberger1984linear}: its objective function and constraints are all linear. Thus, we can solve \eqref{LProg1} by simply applying an off-the-shelf linear programming solver like the \verb+linprog+ function in the Optimization Toolbox of \verb+MATLAB+.

If one or more of $\phi_1, \ldots, \phi_K$ correspond to $E$-optimality, then the following example illustrates that we can still verify the conditions in Theorem \ref{Th1} via linear programming.

\begin{example1} Consider Example 1 in Section 3.2, 
 where  $\phi_1({\bf M}) = -\lambda_{min}({\bf M})$ and $\phi_2, \ldots, \phi_K$ all correspond to $D$-, $A$-, $c$-, or $L$-optimality. Recall that we defined $r^*$ to be the geometric multiplicity of $\lambda_{min}(\mathcal{I}_{f_1}({\bf w}^*))$. We previously showed that if $r^* = 1$, then Condition 2 in Theorem \ref{Th2} is equivalent to \eqref{eq:Econdspecial}, which defines a set of $N$ linear equalities in $\eta_2, \ldots, \eta_K$ Thus, we can minimize $\sum \limits_{k=2}^K \eta_k$ subject to the $2(K-1)$ linear inequalities defined in \eqref{eq:Cond1a}--\eqref{eq:Cond1b} and the following relaxed version of \eqref{eq:Econdspecial},
\begin{equation*} 
[({\bf v^*})^T {\bf z}_{f_1}({\bf u}_i, {\bf w}^*) ]^2 - \lambda_{min}(\mathcal{I}_{f_1}({\bf w}^*)) + \sum \limits_{k=2}^K \eta_k d_{\phi_k, f_k}({\bf u}_i, {\bf w}^*) \leq \delta, \forall ~ i = 1, \ldots, N.
\end{equation*}  
If we are able to find a solution to this linear programming problem, then we know that ${\bf w}^*$ is an optimal solution. 
 
We also showed in Section 3.2 that if $r^* > 1$, then Condition 2 holds if and only if there exists $a_1, \ldots, a_{r^*} \geq 0$ and $\eta_2, \ldots, \eta_K \geq 0$ such that $\sum \limits_{j=1}^{r^*} a_j = 1$ and \eqref{eq:Econd1} holds. Thus, Theorem \ref{Th1} says that ${\bf w}^*$ is optimal for \eqref{EffProb2} if and only if there exists $a_1, \ldots, a_{r^*} \geq 0$ and  $\eta_2, \ldots, \eta_K \geq 0$ such that $\eta_k(\Phi_k({\bf w}^*) - h_k(m_k)) = 0$ for all $k = 2, \ldots, K$ and  \eqref{eq:Econd1} holds. Observe that \eqref{eq:Econd1} defines $N$ linear inequalities in $\eta_2, \ldots, \eta_K$ and in $a_1, \ldots, a_{r^*}$. Thus, we can minimize $\sum \limits_{k=2}^K \eta_k + \sum \limits_{j=1}^{r^*} a_j$ subject to the $2(K-1)$ linear inequalities defined in \eqref{eq:Cond1a}--\eqref{eq:Cond1b} and the following relaxation of \eqref{eq:Econd1},
\begin{equation*} 
d_{-\lambda_{min}, f_1, {\bf a}}({\bf u}_i, {\bf w}^*) + \sum \limits_{k=2}^K \eta_k d_{\phi_k, f_k}({\bf u}_i, {\bf w}^*) \leq \delta, ~ \forall ~ i = 1, \ldots, N.
\end{equation*}  
 Once again, if we are able to find a solution to this linear programming problem, then we know that ${\bf w}^*$ is an optimal solution. 

\end{example1} 

\section{Maximin optimal designs}

Suppose that an experimenter requires a design that yields reasonable efficiencies for \emph{all of} $K$ single-objective optimality criteria. 
We formulate this \emph{maximin} design problem as:
\begin{eqnarray}
\left\{ \begin{array}{l}
\underset{{\bf w} \in \mathbb{R}^N}{\max} ~~~~~~~~~~~~\min\{ \text{Eff}_1({\bf w}), \ldots, \text{Eff}_K({\bf w}) \} \\
\mbox{subject to:} ~~~~~ \sum \limits_{i=1}^N w_i = 1, ~ w_i \geq 0, ~ i = 1, 2, \ldots, N. 
\end{array}
\right \}.
\label{MaximinProb3}
\end{eqnarray} 

\subsection{Convex optimization problem}
Problem \eqref{MaximinProb3} is hard to solve directly, since the objective function involves a minimization. However, we can equivalently formulate \eqref{MaximinProb3} as:
\begin{eqnarray}
\left\{ \begin{array}{l}
\underset{{\bf w} \in \mathbb{R}^N,t \in \mathbb{R}}{\max} ~~~~1/t \\
\mbox{subject to:}~\text{Eff}_k({\bf w}) \ge 1/t, ~k=1, \ldots, K, \\
~~~~~~~~~~~~~~~~t \ge 0, ~w_i \ge 0, i=1, \ldots, N, ~\sum_{i=1}^N w_i=1 
\end{array}
\right \}.
\label{MaximinProb4}
\end{eqnarray} 
This formulation eliminates the minimization from the objective function by introducing an additional optimization variable ($t$). Furthermore, when $\Phi_k({\bf w}) = \phi_k(\mathcal{I}_{f_k}({\bf w}))$ with $\phi_1, \ldots, \phi_K$ chosen from the convex functions in Table \ref{tab:eff}, \eqref{MaximinProb4} is equivalent to:
\begin{eqnarray}
\left\{ \begin{array}{l}
\underset{{\bf w} \in \mathbb{R}^N,t \in \mathbb{R}}{\min} ~~~~~t \\
\mbox{subject to:}~ \Phi_k({\bf w})  \le h_k(1/t)  , ~k=1, \ldots, K, \\
~~~~~~~~~~~~~~~~t \ge 0, \sum_{i=1}^N w_i=1, 
~w_i \ge 0, i=1, \ldots, N.
\end{array}
\right \},
\label{MaximinProb5}
\end{eqnarray} 
where $h_k(\cdot)$ is defined in \eqref{eq:defh}. This is a convex optimization problem, because $h_k(1/t)$ is a concave function of $t$ and $\Phi_k({\bf w})$ is a convex function of ${\bf w}$. Furthermore, we can solve \eqref{MaximinProb4} using \verb+CVX+, as it fits into the\verb+CVX+ modelling framework described in \citet{grant2008graph}. Note that our formulation is more general than that of \citet{wong2022cvx}, as we allow the user to ``mix-and-match" any combination of the criteria in Table \ref{tab:eff}.

\subsection{Necessary and sufficient conditions for optimality}

The following result characterizes optimality for (\ref{MaximinProb5}). 
\begin{theorem}
Suppose that $({\bf w}^*, t^*)$ are feasible for problem \eqref{MaximinProb5}. Then, $({\bf w}^*, t^*)$ solves (\ref{MaximinProb5}) if and only if there exists $\eta_1, \ldots, \eta_K \geq 0$ satisfying:  
\begin{enumerate}
\item $\sum \limits_{k=1}^K \eta_k \left ( \frac{d}{d_t} h_k(1/t) \Big |_{t = t^*} \right ) = 1$.  
\item $\eta_k (\Phi_k({\bf w}^*) - h_k(1/t^*)) = 0$ for all $k = 1, 2, \ldots, K$.
\item ${\bf w}^* \in \underset{{\bf w} \in \Omega}{\arg \min} \left \{ \sum \limits_{k=1}^K \eta_k \Phi_k({\bf w}) \right \}.$
\end{enumerate}  
\label{Th2}
\end{theorem}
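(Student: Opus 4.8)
The plan is to recognize Conditions 1--3 as the Karush--Kuhn--Tucker (KKT) conditions for the convex program \eqref{MaximinProb5} and to prove their equivalence with optimality by pairing a direct Lagrangian argument (for sufficiency) with a constraint-qualification argument (for necessity). I would work with the Lagrangian
\[
L({\bf w}, t, \bm\eta) = t + \sum_{k=1}^K \eta_k\left(\Phi_k({\bf w}) - h_k(1/t)\right), \qquad \eta_k \ge 0,
\]
keeping ${\bf w} \in \Omega$ and $t \ge 0$ as explicit set constraints. The key structural observation is that, since each $\Phi_k$ is convex in ${\bf w}$ and each $h_k(1/t)$ is concave in $t$, the map $L$ is jointly convex and separates additively into a ${\bf w}$-part $\sum_k \eta_k \Phi_k({\bf w})$ and a $t$-part $t - \sum_k \eta_k h_k(1/t)$.

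First I would establish sufficiency, the more elementary direction. Assuming Conditions 1--3, I would show that $({\bf w}^*, t^*)$ minimizes $L(\cdot,\cdot,\bm\eta)$ over $\Omega \times (0,\infty)$ by minimizing each part separately. For the ${\bf w}$-part, Condition 3 together with the subdifferential sum and scaling rules of Lemma \ref{lem:sub} and the identification of \eqref{eq:ineq-extend} with the inclusion $0 \in \partial\left(\sum_{k=1}^K \eta_k \Phi_k\right)({\bf w}^*) + N_{\Omega}({\bf w}^*)$ forces ${\bf w}^*$ to minimize $\sum_k \eta_k \Phi_k$ over $\Omega$. For the (convex) $t$-part, Condition 1 is exactly the vanishing of its derivative, so $t^*$ is its global minimizer on $(0,\infty)$. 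Then for any feasible $({\bf w}, t)$, nonnegativity of the $\eta_k$ and feasibility give $L({\bf w}, t, \bm\eta) \le t$, whereas Condition 2 (complementary slackness) gives $L({\bf w}^*, t^*, \bm\eta) = t^*$; chaining $t^* = L({\bf w}^*, t^*, \bm\eta) \le L({\bf w}, t, \bm\eta) \le t$ proves $({\bf w}^*, t^*)$ is optimal.

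For necessity I would invoke the KKT theorem for convex programs, so the crux is to verify a constraint qualification. Here I would exploit that each $h_k(m)$ in \eqref{eq:defh} is strictly decreasing in $m$, so $h_k(1/t)$ is strictly increasing in $t$: starting from the assumed feasible point $({\bf w}^*, t^*)$ and increasing $t$ slightly to $t' > t^*$ gives $\Phi_k({\bf w}^*) \le h_k(1/t^*) < h_k(1/t')$ for every $k$, so $({\bf w}^*, t')$ strictly satisfies all nonaffine inequality constraints while the remaining constraints are affine. This is Slater's condition, yielding multipliers $\eta_k \ge 0$ satisfying stationarity and complementary slackness. I would then read off the conditions: complementary slackness is Condition 2; because every efficiency is at most $1$ we have $t^* \ge 1 > 0$, so $t \ge 0$ is inactive and smooth stationarity in $t$ gives Condition 1; and stationarity in ${\bf w}$, namely $0 \in \sum_k \eta_k \partial\Phi_k({\bf w}^*) + N_{\Omega}({\bf w}^*)$, is precisely Condition 3 via Theorem \ref{thm:single-extend}.

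The hard part will be the careful handling of the constraint qualification in tandem with the nonsmoothness of the $\Phi_k$: I must confirm the strict-monotonicity argument covers all three branches of \eqref{eq:defh}, check that $t^* > 0$ so that the $t$-stationarity is a genuine \emph{equality} rather than a one-sided inequality, and correctly combine the smooth $t$-derivative with the subdifferential calculus in ${\bf w}$. The linchpin that keeps the translation routine is the recognition that \eqref{eq:ineq-extend} in Theorem \ref{thm:single-extend} is nothing but the normal-cone inclusion $0 \in \partial\left(\sum_{k=1}^K \eta_k \Phi_k\right)({\bf w}^*) + N_{\Omega}({\bf w}^*)$, which is exactly the ${\bf w}$-stationarity of the Lagrangian and hence equivalent to Condition 3.
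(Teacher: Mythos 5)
Your proposal is correct and follows the same essential route as the paper: recognize \eqref{MaximinProb5} as a convex program, verify Slater's condition, invoke the KKT characterization (Proposition \ref{prop:general-constrained} in the paper, which cites Mordukhovich and Nam), exploit the additive separability of the Lagrangian in $({\bf w},t)$, and use $t^* \geq 1 > 0$ to turn the $t$-stationarity into the exact equality of Condition 1, with Condition 3 recovered through the normal-cone reading of Theorem \ref{thm:single-extend}. Two tactical choices differ, both legitimately. First, your Slater point perturbs the given feasible $({\bf w}^*, t^*)$ upward in $t$, using strict monotonicity of $t \mapsto h_k(1/t)$; this does hold in all three branches of \eqref{eq:defh} (derivatives $q_k/t$, $-\min_{\Omega}\Phi_k/t^2$, and $\min_{\Omega}\Phi_k$, all positive given the sign of $\min_{\Omega}\Phi_k$ in each case), whereas the paper constructs a fresh point ${\bf w}' = \arg\min_{\Omega}\Phi_1$ with $t' = 2/\min_k \text{Eff}_k({\bf w}')$; your version has the minor advantage of not needing all efficiencies to be strictly positive at an arbitrarily chosen design. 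Second, you prove sufficiency by a direct weak-duality chain $t^* = L({\bf w}^*,t^*,\bm\eta) \leq L({\bf w},t,\bm\eta) \leq t$, which requires no constraint qualification at all, while the paper obtains both directions at once from the if-and-only-if form of Proposition \ref{prop:general-constrained}; your split makes the elementary direction self-contained at the cost of a slightly longer argument. The paper also carries an explicit multiplier $\nu$ for the constraint $t \geq 0$ and eliminates it at the end, whereas you dispose of it immediately by noting the constraint is inactive at $t^* \geq 1$; these are equivalent bookkeeping choices.
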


We now discuss how to use the results in Section \ref{sec:single-cond} to rewrite Theorem \ref{Th2}. First, suppose that $\phi_1, \ldots, \phi_K$ all correspond to $D$-, $A$-, $c$-, or $L$-optimality. Then, it follows from Theorem \ref{thm:single-extend} and Lemma \ref{lem:sub} that we can replace Condition 3 in Theorem \ref{Th2} with:
\begin{equation} \sum \limits_{k=1}^K \eta_k d_{\phi_k, f_k} ({\bf u}_i, {\bf w}^*) \leq 0 \text{ for all } i = 1, 2, \ldots, N,\label{eq:mmCond} 
\end{equation} 
where $d_k({\bf u}_i, {\bf w}^*)$ is given in \eqref{eq:defd} and formulas for $\nabla \phi_k({\bf M})$ are given in Table \ref{tab:phi}. Otherwise, we will need to apply Theorem \ref{thm:single-extend} with Lemma \ref{lem:sub} and Lemma \ref{prop:eopt} to rewrite Condition 3. We will illustrate with an example. 

\begin{example} 
Suppose that $\phi_1({\bf M}) = -\lambda_{min}({\bf M})$, and $\phi_2, \ldots, \phi_K$ all correspond to $D$-, $A$-, $c$-, or $L$-optimality.  Then, it follows from Theorem \ref{thm:single-extend} and Lemma \ref{lem:sub} that Condition 3 in Theorem \ref{Th2} holds if and only if  there exists ${\bf g} \in \partial \Phi_1({\bf w}^*)$ such that
\begin{equation} \eta_1 {\bf g}^T \left ( {\bf w}^* - {\bf e}_i \right ) + \sum \limits_{k=2}^K \eta_k d_{\phi_k, f_k}({\bf u}_i, {\bf w}^*) \leq 0, ~\forall~ i = 1, 2, \ldots, N,. \label{eq:mmcond0}
\end{equation} 
Let $r^*$ be the geometric multiplicity of $\lambda_{min}(\mathcal{I}_{f_1}({\bf w}^*))$. Then, based on Lemma \ref{prop:eopt}, we can consider two cases: 
\begin{itemize} 
\item \underline{Case 1 ($r^*$ = 1)}: Condition 3 in Theorem \ref{Th2} holds if and only if 
\begin{equation} 
\eta_1 [({\bf v^*})^T {\bf z}_{f_1}({\bf u}_i, {\bf w}^*) ]^2 - \eta_1\lambda_{min}(\mathcal{I}_{f_1}({\bf w}^*)) + \sum \limits_{k=2}^K \eta_k d_{\phi_k, f_k}({\bf u}_i, {\bf w}^*) \leq \delta, \forall ~ i = 1, \ldots, N,\label{eq:defde-mm}
\end{equation} 
where ${\bf v}^*$ denotes an arbitrary unit eigenvector corresponding to  $\lambda_{min}(\mathcal{I}_{f_1}({\bf w}^*))$. 
\item \underline{Case 2 ($r^* > 1)$}: Condition 3 in Theorem \ref{Th2} holds if and only if there exists $a_1, \ldots, a_{r^*} \geq 0$ such that $\sum \limits_{j=1}^{r^*} a_j = 1$ and 
$$ \eta_1 d_{-\lambda_{min}, f_1, {\bf a}} ({\bf u}_i, {\bf w}^*) + \sum \limits_{k=2}^K \eta_k d_{\phi_k, f_k} ({\bf u}_i, {\bf w}^*) \leq 0  \text { for all }  i = 1, 2, \ldots, N, $$
where $d_{-\lambda_{min}, f_1, {\bf a}} ({\bf u}_i, {\bf w}^*)$ is defined in \eqref{eq:defde}.
\end{itemize} 
\end{example} 

\subsection{Optimality verification via linear programming} 
Suppose that we have obtained a candidate solution $({\bf w}^*, t^*)$ by solving \eqref{MaximinProb5} numerically (e.g. via \verb+CVX+), where $\phi_1, \ldots, \phi_K$ all correspond to $D$-, $A$-, $c$-, or $L$-optimality. Based on the results in Section 4.2, we would like to find 
$\eta_1, \ldots, \eta_K \geq 0$ such that:
\begin{gather}  
\sum \limits_{k=1}^K \eta_k \left ( \frac{d}{dt} h_k(1/t) \Big |_{t = t^*} \right ) = 1, \label{eq:Cond1m}\\ 
\eta_k (\Phi_k({\bf w}^*) - h_k(1/t^*)) \leq \delta, \quad 1 = 2, \ldots, K, \label{eq:Cond2ma} \\ 
-\eta_k (\Phi_k({\bf w}^*) - h_k(1/t^*)) \leq \delta, \quad 1 = 2, \ldots, K, \label{eq:Cond2mb} \\ 
\sum \limits_{k=1}^K \eta_k d_{\phi_k, f_k}({\bf u}_i, {\bf w}^*) \leq \delta, \quad i = 1, 2, \ldots, N, \label{eq:Cond3m}
\end{gather} 
where $\delta$ is a small positive constant. Here, we have relaxed Conditions 2 and 3 in Theorem \ref{Th2} because ${\bf w}^*$ is an approximate solution, along the lines of Section 3.3. We achieve this goal by solving the following linear programming problem using the \verb+linprog+ function in \verb+MATLAB+:
\begin{eqnarray} 
\label{LProg2}
\left\{ \begin{array}{l}
\underset{\bm \eta \in \mathbb{R}^{K}}{\min} ~ ~{\bf 1}_K^T \bm \eta \\
\mbox{subject to:} ~{\bm \eta} \ge \bm 0_{K}, 
~{\bf b}_2^T {\bm \eta} = 1, ~{\bf B}_2^T {\bm \eta} \le \delta {\bf 1}_N, {\bf C}_2 \bm \eta \leq  \delta {\bf 1}_{K}, -{\bf C}_2 \bm \eta \leq  \delta {\bf 1}_{K}. 
\end{array}
\right \},
\end{eqnarray} 
where ${\bf b}_2$ is the $K$-vector with $k$th entry equal to $\left ( \frac{d}{d_t} h_{k}(1/t) \Big |_{t = t^*} \right )$ for $h_k$ defined in \eqref{eq:defh}, ${\bf B}_2$ is the $K \times N$ matrix with $(k, i)$th entry equal to $d_{\phi_k, f_k}({\bf u}_i, {\bf w}^*)$ for $d_{\phi, f}({\bf u}_i, {\bf w}^*)$ defined in \eqref{eq:defd}, and ${\bf C}_2 = \text{diag}( \Phi_1({\bf w}^*) - h_1(m_1), \ldots, \Phi_K({\bf w}^*) - h_K(m_K))$. 

When one or more of $\phi_1, \ldots, \phi_K$ correspond to $E$-optimality, we can still rewrite Condition 3 in Theorem \ref{Th2} as a set of linear inequalities; see e.g. Example 2. Thus, we can still verify the conditions in Theorem \ref{Th2} using linear programming. We omit the details of how to set up the linear programming problem, as the ideas are similar to Example 1 in Section 3.3.

\section{Applications}

In all three of the following applications, we set $\delta = 10^{-4}$ when verifying optimality via linear programming as described in Sections 3.3 and 4.3. Any choice of $\delta$ larger than $10^{-6}$ yields the same results. All computations were performed on a 2021 M1 Macbook Pro with 10 cores and 16 GB memory.  We provide MATLAB code to reproduce all numerical results at \verb+https://github.com/lucylgao/multi-objective-paper-code-2022+. 

\label{sec:app}
{\bf Application 1.} Consider a four-parameter  compartment model of the form \eqref{Model1} with $p = 1$, $q = 4$, $f(x, \bm \theta) = \theta_1 e^{- \theta_2 x} + \theta_3 e^{- \theta_4 x}$, and $S = [0, 15]$, where the responses $y_i$ represent the concentration level of a drug in compartments and $x$ denotes the sampling time. 
This model has been studied in optimal designs for various optimality
criteria, including multi-objective criteria \citep{huang1998sequential, cheng2019multiple}. 

We seek efficiency-constrained optimal designs that solve \eqref{EffProb1} with $\Phi_k({\bf w}) = \phi_k(\mathcal{I}_{f}({\bf w}))$ for $k = 1, 2, 3$. As in \citet{cheng2019multiple}, we let $\phi_1$ correspond to $L$-optimality with ${\bf L} = \text{diag}\left (\frac{1}{\theta_1}, \frac{1}{\theta_2}, \frac{1}{\theta_3}, \frac{1}{\theta_4} \right)$, $\phi_2$ correspond to $D$-optimality, and $\phi_3$ correspond to $L$-optimality with ${\bf L} = \left [ \int_2^{10} {\bf z}_f(x, \bm \theta^*) {\bf z}_f^T(x, \bm \theta^*) dx \right ]^\frac{1}{2}$, where ${\bf z}_f(x, \bm \theta) = (e^{-\theta_2x}, -\theta_1xe^{-\theta_2 x},
e^{-\theta_4 x}, -\theta_3xe^{-\theta_4 x})^\top$ and $\bm \theta^* = (5.25, 1.34, 1.75, 0.13)^\top$. We discretize the continuous design space $S$ to form  $S_N$ with
$u_i=15(i-1)/(N-1)$ for $i=1, \ldots, N$. 

First, we find the single-objective optimal designs by solving $\underset{{\bf w} \in \Omega}{\min} ~\Phi_k({\bf w})$ for each $k = 1, 2, 3$ using \verb+CVX+. Then, we solve \eqref{EffProb1} with $m_2 = 0.9$, $m_3 = 0.8$, and $N = 501$  using \verb+CVX+ to get a solution denoted as ${\bf w}^{*m}$. We report the single-objective optimal designs and ${\bf w}^{*m}$ in Table \ref{table1}. The 
efficiencies at ${\bf w}^{*m}$ are close to those reported in \citet{cheng2019multiple}.

\begin{table}[h!]
\caption{For Application 1, single-objective optimal designs and the efficiency-constrained optimal design with $m_2 = 0.9, m_3 = 0.8$. }
\begin{center}
\begin{tabular}{rrrrrrrr} \hline
$\phi_1$-optimal & $\phi_2$-optimal & $\phi_3$-optimal & efficiency-constrained\\
points (weights) & points (weights)  & points (weights)  & points (weights) \\ \hline 
0 (0.0591) &  0 (0.2500) & 0 (0.1339) & 0 (0.1339)  \\ 
0.6300 (0.1315) & 0.6600 (0.2500) & 0.9600 (0.0663) & 0.6600 (0.1513)\\ 
2.9400  (0.3126) & 2.8800 (0.2500) &  3.300 (0.4502) & 3.0300 (0.2481) \\ 
13.2900 (0.4968) & 11.0100 (0.2441) & 9.7500 (0.2231) & 3.0600 (0.0942)  \\ 
				&  11.0400 (0.0059) & 9.7800 (0.2502) & 10.8300 (0.0807) \\ 
				&                 &                    & 10.8600 (0.2918) \\
    \hline
\end{tabular}
\end{center} 
\label{table1}
\end{table}

We then verify the conditions for optimality in Theorem \ref{Th1} for ${\bf w}^{*m}$ by using the \verb+linprog+ MATLAB function to solve \eqref{LProg1} with $\delta = 10^{-4}$ as described in Section 3.3. Solving \eqref{LProg1} yielded $\eta_2^* = 36.4870$ and $\eta_3^* = 5.0767$. Since we obtain a solution, we know that ${\bf w}^{*m}$ is indeed the efficiency-constrained optimal design (Theorem \ref{Th1}). Figure \ref{fig1} shows that $d_{\phi_1, f}(u_i, {\bf w}^{*m})) + \sum \limits_{k=2}^3 \eta_k^* d_{\phi_k, f}(u_i, {\bf w}^{*m})) \leq \delta$ for all $i = 1, 2, \ldots, N$; this amounts to visually showing that Condition 2 in Theorem \ref{Th1} is satisfied for ${\bf w}^{*m}$, $\eta_2^*$ and $\eta_3^*$. Figure \ref{fig1} also shows that $d_{\phi_k, f}(u_i, {\bf w}^{*m})$ is not uniformly non-negative for all $k = 1, 2, 3$. Thus, ${\bf w}^{*m}$ is not the single-objective optimal design that minimizes $\Phi_1$, $\Phi_2$ or $\Phi_3$ (Theorem \ref{thm:single-extend} and Lemma \ref{lem:other}).

 \begin{figure}[h!]
\centering
\includegraphics[scale=0.9]{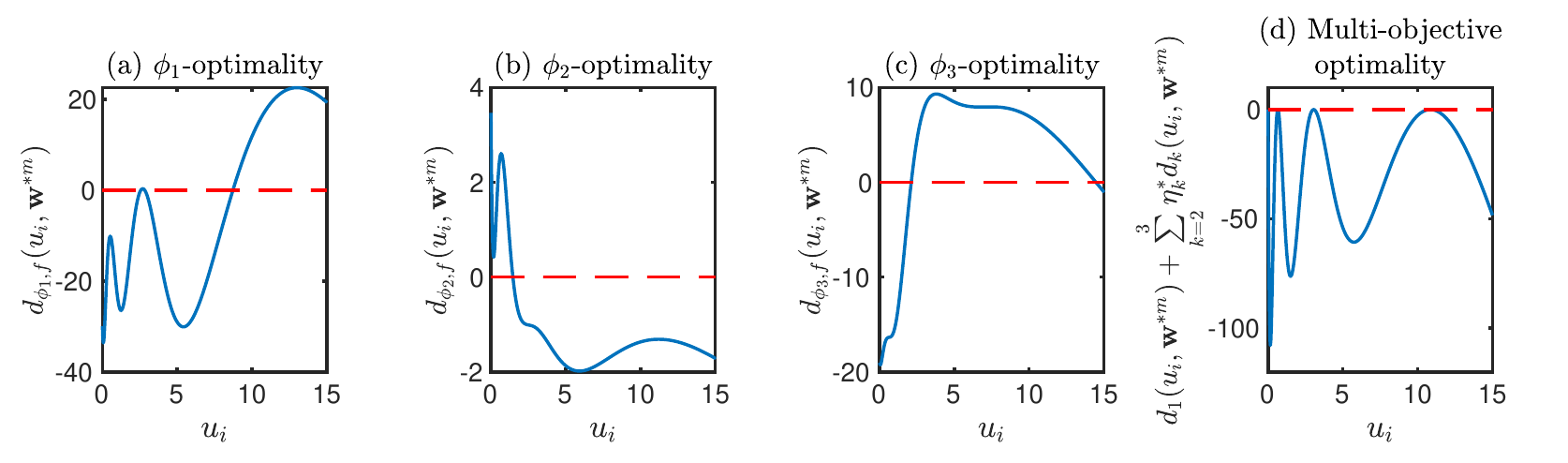}
\caption{For $(m_2, m_3)=(0.9, 0.8)$ in Application 1, panels (a)--(c) display plots of $d_{\phi_k, f}(u_i, {\bf w}^{*m}))$ for $k = 1, 2, 3$, and panel (d) displays $d_{\phi_1, f}(u_i, {\bf w}^{*m})) + \sum \limits_{k=2}^3 \eta_k^* d_{\phi_k, f}(u_i, {\bf w}^{*m}))$. In panel (d), the red dashed line represents the horizontal line $y = \delta$, for $\delta = 10^{-4}.$}
\label{fig1}
\end{figure}

Next, we vary $m_2$ and $m_3$ and look at how the results change; representative results are given in Table \ref{table2}.
For $m_2=0.90$ and $m_3=0.70$, we find that $\eta_3^* = 0$; this is because $\Phi_3({\bf w}^{*m}) < h_3(m_3)$ and $\eta_2^*, \eta_3^*$ satisfy Condition 1 in Theorem \ref{Th1}. Similarly, for $m_2=0.70$ and $m_3=0.70$,  we find that $\eta_2^* = \eta_3^* = 0$ because  $\Phi_k({\bf w}^{*m})< h_k(m_k)$ (i.e. $\text{Eff}_k({\bf w}^{*m})>m_k$) for $k = 2, 3$; 
this implies that the multi-objective optimal design ${\bf w}^{*m}$ is also a single-objective optimal design maximizing $\Phi_1$ (Theorem \ref{thm:single-extend} and Lemma \ref{lem:other}).
For $m_2=0.90$ and $m_3=0.90$, there is no feasible solution.

\begin{table}[h!]
\begin{small}
\caption{For Application 1, efficiencies and $\eta_2^*$ and $\eta_3^*$, for various $(m_2, m_3)$.}
\begin{center}
\begin{tabular}{ccccc} \hline
Case ($m_2,m_3$) &(0.90, 0.80) &(0.90, 0.70)  &(0.70, 0.70)&(0.90, 0.90) \\ \hline
$\eta_2^*, ~\eta_3^*$ & 36.4870, ~5.0767 & 7.2923, ~0 & 0, ~0 &  NA \\
$\text{Eff}_1({\bf w}^{*m})$ & 0.8694  & 0.9360  & 1.000  &  NA \\
$\text{Eff}_2({\bf w}^{*m}), ~\text{Eff}_3({\bf w}^{*m})$
& 0.9000, ~0.8000  & 0.9000, ~0.7035  &  0.7317, ~0.7746 & NA \\ 
\hline
\end{tabular}
\end{center}
\label{table2}
\end{small}
\end{table}

Computing the optimal designs for $(m_2, m_3) = (0.9, 0.8)$ took 19.5, 25.6, and 36.2 seconds for $N = 101, 501, 1001$. Verifying the optimality of the efficiency constrained design took less than a second.

\vspace{2mm}

\noindent
{\bf Application 2.}  There are several dose response models
commonly used in clinical dose finding studies. We consider
four competing regression models of the form \eqref{Model1} from 
\citet{bretz2010practical} to construct maximin optimal designs  and
verify the necessary and sufficient conditions in Theorem \ref{Th2} for the optimal designs.
The four different models are: (i) linear model: $f_1(x, \bm \theta)=\theta_{11} + \theta_{12}x$, (ii) Emax I model: $f_2(x, \bm \theta)=\theta_{21} + \theta_{22}x/(\theta_{23}+x)$, (iii), Emax II model: $f_3(x, \bm \theta)=\theta_{31} + \theta_{32}x/(\theta_{33}+x)$, (iv) Logistic model: $f_4(x, \bm \theta)=\theta_{41} + 
\theta_{42}/\left(1+\exp[(\theta_{43}-x)/\theta_{44}]    \right)$,
where $x \in [0, 500]$ ($\mu$g) is the dose level. Let $S_N$ contain $N = 501$ equally spaced grid points in $[0, 500]$. As in \citet{bretz2010practical}, we assume that the true parameter values for the Emax I , Emax II, and logistic models are, respectively,
 (60, 294, 25),
 (60, 340, 107.14), and
(49.62, 290.51, 150, 45.51). (The
information matrix for the linear model does not depend on its true parameter values.)

We let  $\Phi_k({\bf w}) = \phi_k(\mathcal{I}_{f_k}({\bf w}))$ with $\phi_k$ corresponding to $D$-optimality (defined in Table \ref{tab:eff}) for all $k = 1, \ldots, 4$, and then solve problem \eqref{MaximinProb5} via \verb+CVX+ to obtain a solution denoted as ${\bf w}^{*mm}$. The single-objective and maximin D-optimal designs on $S_N$ are given in Table \ref{table3}. We found that $t^*=1.1712$,  $\text{Eff}_k({\bf w}^{*mm})=0.8538$ for $k=1,2, 4$ and $\text{Eff}_3({\bf w}^{*mm})=0.8547$. Solving problem \eqref{LProg2} via the MATLAB function \verb+linprog+ yielded $\eta_1^* = 0.1983, \eta_2^* = 0.1291, \eta_3^* = 0$, and $\eta_4^* = 0.0968$. Since we obtain a solution,  we know that ${\bf w}^{*mm}$ is the maximin $D$-optimal design (Theorem \ref{Th2}).

Figure \ref{fig4} displays plots of $d_{\phi_k, f_k}(u_i, {\bf w}^{*mm})$ for $k = 1, \ldots, 4$ and \\ $\sum \limits_{k=1}^4 \eta_k^* d_{\phi_k, f_k}(u_j, {\bf w}^{*mm})$. Figure \ref{fig4}(e) confirms that Condition 3 in Theorem \ref{Th2} is satisfied. Figure \ref{fig4}(a)--(d) shows that  ${\bf w}^{*mm}$ is not the single-objective $D$-optimal design for any of the four models (Theorem  \ref{thm:single-extend} and Lemma \ref{lem:other}). 

\begin{table}[h!]
\begin{small}
\caption{Optimal designs for Application 2.}
\begin{center}
\begin{tabular}{ccccc} \hline
linear model & Emax I  & Emax II   & logistic & maximin \\
points (weights)  &  points (weights)  & points (weights)  & points (weights)  & points (weights)  \\ \hline
0 (0.5000)  & 0 (0.3333) & 0 (0.3333) & 0 (0.2500) & 0 (0.2406) \\
500 (0.5000) & 22 (0.3333) & 75 (0.3333) & 114 (0.2500) & 19 (0.1806) \\
                 & 500 (0.3333)  & 500 (0.3333) & 204 (0.1316)  & 112 (0.1314) \\
                 &  &  & 205 (0.2500) & 204 (0.1070) \\
                 & & & 500 (0.2500) &  205 (0.0178) \\
                 & & & &  500 (0.3225) \\ 
\hline
\end{tabular}
\end{center}
\label{table3}
\end{small}
\end{table}

\begin{figure}[h!]
\centering
\includegraphics[scale=0.9]{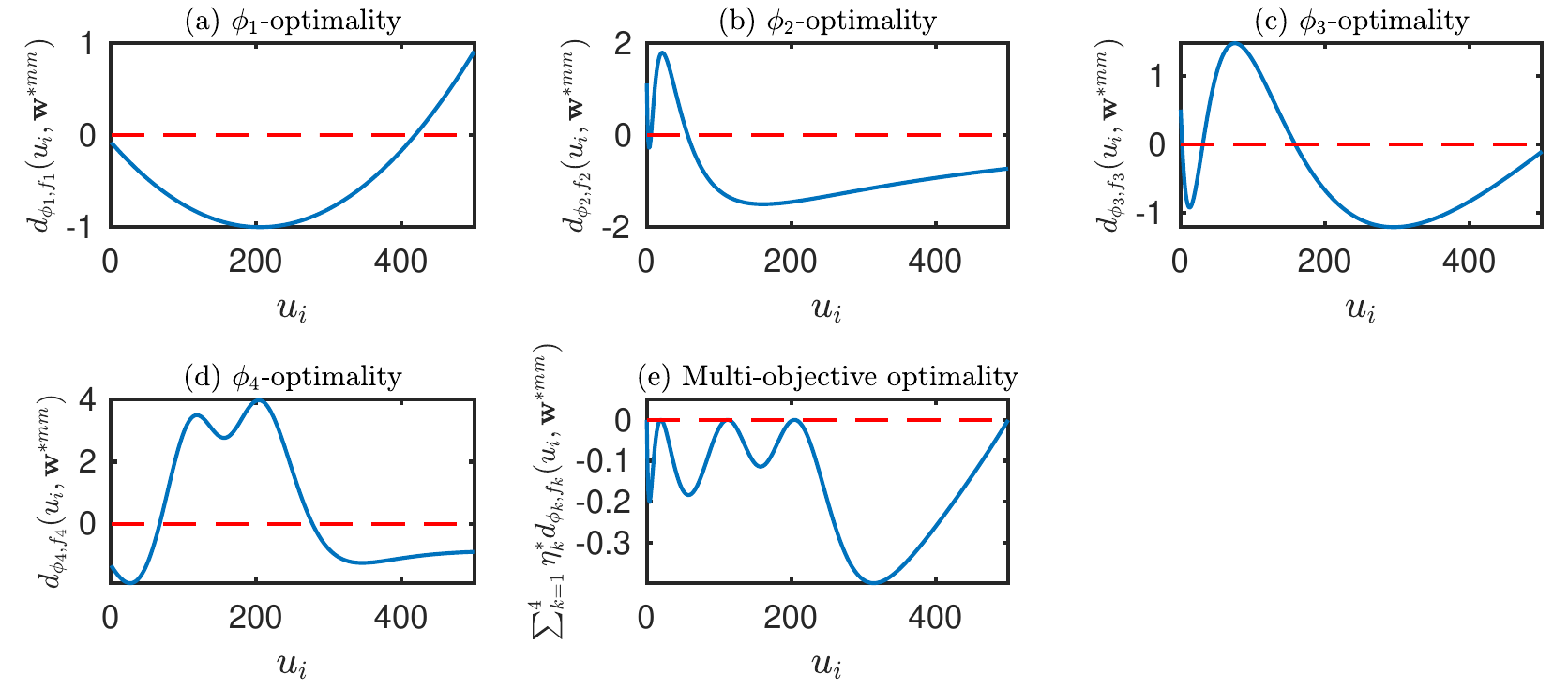}
\caption{For Application 2, we display plots of (a)--(d) $d_{\phi_k, f_k}(u_i, {\bf w}^{*mm})$ for $k = 1, 2, 3, 4$, and (e) $ \sum \limits_{k=1}^4 \eta_k^* d_{\phi_k, f_k}(u_i, {\bf w}^{*mm})$. In panel (e), the red dashed line represents the horizontal line $y = \delta$, for $\delta = 10^{-4}.$ }
\label{fig4}
\end{figure}

Computing the optimal designs took 18.2, 31.0, and 48.7 seconds for $N = 101, 501, 1001$. Verifying the optimality of the maximin design took less than a second.

\noindent {\bf Application 3}: Consider the linear model of the form \eqref{Model1} with $p = 2$, $q = 5$, $f({\bf x}; \bm \theta) = \theta_1 + x_1 \theta_2 + x_2 \theta_3 + x_1x_2 \theta_4 + x_2^2 \theta_5$, and $S = \{0, 1\} \times [-1, 1]$. We let $S^1_{N/2}$ contain 201 equally spaced points on $[-1, 1]$ and discretize the design space $S$ to form $S_N = \{0, 1\} \times S^1_{N/2}$, for $N = 402$. Here, the information matrix does not depend on the true parameter values. 

We let $\Phi_k({\bf w}) =  \phi_k(\mathcal{I}_f({\bf w}))$ for $k = 1, 2, 3$ with $\phi_1$ corresponding to $A$-optimality, $\phi_2$ corresponding to $E$-optimality, and $\phi_3$ corresponding to $c$-optimality with ${\bf c} = (0, 0, 0, 1, 0)^T$, and then solve problem \eqref{MaximinProb5} as described in Section 4.1 to obtain the maximin optimal design; we denote the solution as ${\bf w}^{*mm}_{lm}$. The single-objective and maximin optimal designs on $S_N$ are given in Table \ref{tab:linapp-design}. We found that $t^* = 1.2979$, $\text{Eff}_1({\bf w}^{*mm}_{lm}) = 0.9298$, and $\text{Eff}_1({\bf w}^{*mm}_{lm}) = 0.7705$ for $k = 2, 3$. In this case, the geometric multiplicity of $\lambda_{min}(\mathcal{I}_f({\bf w}^{*mm}_{lm}))$ was equal to one. Thus, to verify the optimality of ${\bf w}^{*mm}_{lm}$, we use the \verb+linprog+ MATLAB function to minimize $\sum \limits_{k=1}^3 \eta_k$ subject to the linear inequality constraints in \eqref{eq:defde-mm} and  the linear equality and inequality constraints in \eqref{eq:Cond1m} -- \eqref{eq:Cond2mb}.  We found a solution at $\eta_1^* = 0$, $\eta_2^* = 0.2445$, and $\eta_3^* = 0.0151$. Therefore, the maximin design we found is optimal. 

\begin{table}[h!]
\begin{small}
\caption{Optimal designs for Application 3. }
\begin{center}
\begin{tabular}{ccccc} \hline
$A$-optimal & $E$-optimal  & $c$-optimal   & maximin  \\
points [weights]  &  points [weights] & points [weights]  & points [weights] \\ \hline
(0, -1) [0.1859] & (0, -1) [0.2069] & (0, -1) [0.2500] & (0, -1) [0.1926] \\
(1, -1) [0.1399] & (1, -1) [0.1379] & (1, -1) [0.2500] & (1, -1) [0.1926] \\
(0, 0) [0.2287] & (0, 0) [0.2414] & (0, 1) [0.2500] & (0, 0) [0.1679]\\ 
(1, 0) [0.1197] & (0, 1) [0.0690] & (1, 1) [0.2500] &  (1, 0) [0.0616] \\ 
(0, 1) [0.1859] & (1, 1) [0.2069] & & (0, 1) [0.1926]\\ 
(1, 1) [0.1399] & & & (1, 1) [0.1926]  \\ 
\hline
\end{tabular}
\end{center}
\label{tab:linapp-design}
\end{small}
\end{table}
Computing the optimal designs took 11.2, 14.5, and 23.2 seconds for $N/2 = 101, 201, 401$. Verifying the optimality of the maximin design took less than a second.

\section{Conclusion}

In this paper, we show how to solve multi-objective optimal design problems on a discrete design space using convex optimization, and how to verify the optimality of the designs using linear programming. Our approach can be applied to efficiency-constrained or maximin optimal design problems that combine any of the single-objective criteria in Table \ref{tab:eff}.\textcolor{white}{\citep{lewis1999nonsmooth} }

The multi-objective optimal design setting offers a natural opportunity to gain robustness against parameter and/or model misspecification: we can include objective functions formulated with a range of guesses for $\bm \theta^*$ and/or objective functions formulated with the information matrices of multiple models. A sequential multi-objective optimal design setting may offer further opportunities to gain robustness, as we could select design points and weights in stages, and use the data from each stage to inform the choice of parameters and/or models used in the objective functions for the next stage. This may provide a fruitful avenue for future work.

We were able to achieve the results and algorithms in this paper because the inverse of the  asymptotic covariance matrix of the ordinary least squares estimator under model \eqref{Model1} is a linear function of ${\bf w}^*$; see equation \eqref{Info1}. It would be straightforward to extend the results and algorithms in this paper to other models and estimators that have a similar property. For example, we could allow the vector of errors in \eqref{Model1} to be heteroskedastic or have a block diagonal covariance structure, and use the generalized least squares estimator. Another example is generalized linear models with a canonical link function, where we estimate $\bm \theta$ with the maximum likelihood estimator.

Necessary and sufficient conditions for optimality that involve a set of unknown parameters appear in contexts outside of the multi-objective design problems we consider in this paper. For example, this is the case for any single-objective optimal design problem involving a convex but non-differentiable objective function (e.g. single-objective $E$-optimality). When the conditions define linear equalities and inequalities in these unknown parameters, we can verify them using linear programming as in this paper. 

A limitation of our results and algorithms is the assumption of a discrete design space.  An important direction for future work is to develop results and algorithms under a continuous design space. \textcolor{white}{\citep{clarke1983optimization}
}
\section*{Acknowledgements}

This research work was partially supported by Discovery Grants from the Natural Sciences and Engineering Research Council of Canada.

\par


\bibhang=1.7pc
\bibsep=2pt
\fontsize{9}{14pt plus.8pt minus .6pt}\selectfont
\renewcommand\bibname{\large \bf References}
\expandafter\ifx\csname
natexlab\endcsname\relax\def\natexlab#1{#1}\fi
\expandafter\ifx\csname url\endcsname\relax
  \def\url#1{\texttt{#1}}\fi
\expandafter\ifx\csname urlprefix\endcsname\relax\def\urlprefix{URL}\fi

  \bibliographystyle{chicago}      
  \bibliography{refs}   

\appendix

    \setcounter{prop}{0}
    \renewcommand{\theprop}{\Alph{section}\arabic{prop}}
    
\section*{Appendix}

\section{General convex optimization theory} 
In this section, we review general results on optimality conditions in convex optimization.

\begin{prop} 
\label{prop:general}
Let $\Phi: \mathbb{R}^N \mapsto \mathbb{R}$ be a convex function and $C$ be a closed convex set. Then, ${\bf w}^* \in \underset{{\bf w} \in C}{\arg \min} ~ \Phi({\bf w})$ if and only if there exists ${\bf g} \in \partial \Phi({\bf w}^*)$ such that 
\begin{equation} 
{\bf g}^T ({\bf w} - {\bf w}^*) \geq 0, ~~~ \text{for all } {\bf w} \in C, 
\end{equation} 
where $\partial \Phi({\bf w}^*) \equiv \{ {\bf g} \in \mathbb{R}^N: \Phi({\bf w)} - \Phi({\bf w}^*) \geq {\bf g}^T ({\bf w} - {\bf w}^*) ~ \forall ~ {\bf w} \in {\mathbb{R}^N}\}$.
\end{prop}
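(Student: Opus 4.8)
The plan is to establish the two implications separately. The ``if'' direction is immediate from the defining subgradient inequality, while the ``only if'' direction will rely on the subdifferential calculus of convex analysis.

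For sufficiency, I would assume a ${\bf g} \in \partial\Phi({\bf w}^*)$ with ${\bf g}^T({\bf w} - {\bf w}^*) \geq 0$ for all ${\bf w} \in C$ is given, and simply chain the defining inequality $\Phi({\bf w}) - \Phi({\bf w}^*) \geq {\bf g}^T({\bf w} - {\bf w}^*)$ (valid for every ${\bf w}$, in particular every ${\bf w} \in C$) with the assumed nonnegativity to conclude $\Phi({\bf w}) \geq \Phi({\bf w}^*)$ on $C$. This step needs no constraint qualification.

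For necessity, the key idea is to convert the constrained problem into an unconstrained one. I would introduce the indicator function $\delta_C$ of $C$ (equal to $0$ on $C$ and $+\infty$ off $C$), which is convex, and observe that ${\bf w}^* \in \arg\min_{{\bf w} \in C}\Phi({\bf w})$ is equivalent to ${\bf w}^*$ minimizing the convex function $\Phi + \delta_C$ over all of $\mathbb{R}^N$; by Fermat's rule this is the same as $0 \in \partial(\Phi + \delta_C)({\bf w}^*)$. Next I would apply the Moreau--Rockafellar sum rule to split $\partial(\Phi + \delta_C)({\bf w}^*) = \partial\Phi({\bf w}^*) + \partial\delta_C({\bf w}^*)$, identify $\partial\delta_C({\bf w}^*)$ with the normal cone $N_C({\bf w}^*) = \{{\bf p} : {\bf p}^T({\bf w} - {\bf w}^*) \leq 0 \text{ for all } {\bf w} \in C\}$, and conclude that some ${\bf g} \in \partial\Phi({\bf w}^*)$ satisfies $-{\bf g} \in N_C({\bf w}^*)$. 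Unpacking this normal-cone membership yields precisely ${\bf g}^T({\bf w} - {\bf w}^*) \geq 0$ for all ${\bf w} \in C$, as required.

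The step I expect to be the main obstacle is justifying the sum rule, since in general it requires a constraint qualification to obtain equality rather than the always-valid inclusion $\partial\Phi({\bf w}^*) + \partial\delta_C({\bf w}^*) \subseteq \partial(\Phi + \delta_C)({\bf w}^*)$. Here, however, the qualification holds automatically: because $\Phi$ is assumed finite-valued on all of $\mathbb{R}^N$, its effective domain is the entire space and it is continuous everywhere, so the relative interiors of the effective domains of $\Phi$ and $\delta_C$ intersect (their intersection being the relative interior of $C$). This is exactly where the hypothesis that $\Phi$ is a finite-valued convex function---rather than extended-real-valued---does its work. The separation theorem underlying the sum rule is the only genuinely nontrivial ingredient, and the remaining steps are routine bookkeeping with the subgradient inequality.
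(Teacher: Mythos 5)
Your proof is correct, and it takes a genuinely different route from the paper --- in fact, the paper gives no argument at all for this proposition, disposing of it in one line as ``a direct consequence of Theorem 4.14 of Mordukhovich and Nam (2013)'', which is precisely the normal-cone optimality condition $0 \in \partial \Phi({\bf w}^*) + N_C({\bf w}^*)$ that appears as the penultimate step of your necessity argument. In effect, you have supplied a proof of the black box that the paper cites. Your route --- sufficiency by chaining the subgradient inequality with the assumed variational inequality, and necessity by passing to the unconstrained problem $\min_{{\bf w} \in \mathbb{R}^N} \left\{ \Phi({\bf w}) + \delta_C({\bf w}) \right\}$, invoking Fermat's rule, splitting the subdifferential via the Moreau--Rockafellar sum rule, and identifying $\partial \delta_C({\bf w}^*)$ with the normal cone --- is the standard textbook derivation of the cited theorem, and you correctly isolate the one point requiring care: the sum rule holds with equality because $\Phi$, being finite-valued and convex on all of $\mathbb{R}^N$, is continuous everywhere in finite dimensions, so the qualification condition is automatic; this is indeed exactly where the finite-valuedness hypothesis does its work. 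Two small observations: nonemptiness of $C$, which your relative-interior argument needs, is implicit since ${\bf w}^* \in C$ is assumed; and your argument never actually uses closedness of $C$, so you establish the proposition under slightly weaker hypotheses than stated (closedness is presumably carried along in the paper only to match the cited theorem). What the paper's citation buys is brevity and stylistic consistency with its treatment of Proposition \ref{prop:general-constrained}, which is likewise reduced to Theorem 4.18 of the same reference; what your proof buys is a self-contained derivation from first principles of convex analysis, making visible which hypotheses matter and where.
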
 
Proposition \ref{prop:general} is a direct consequence of Theorem 4.14 of Mordukhovich and Nam (2013).  The following result characterizes optimality for constrained convex optimization problems. 
\begin{prop} 
\label{prop:general-constrained}
Define the following convex optimization problem: 
\begin{equation}\underset{{\bf w} \in C}{\min} ~~~\Phi_0({\bf w}) 
\qquad \mbox{subject to:} ~\Phi_l({\bf w}) \leq 0, ~l=1, \ldots, L,\qquad \qquad 
\label{eq:constrained}
\end{equation}
where $\Phi_0, \ldots, \Phi_L:\mathbb{R}^N\rightarrow \mathbb{R}$ are  convex functions and $C$ is a closed convex set. Suppose that Slater's condition holds, i.e. there exists ${\bf w}' \in C$ such that $\Phi_l({\bf w}') < 0$ for all $l=1, \ldots, L$. Then, a feasible solution ${\bf w}^*$ of \eqref{eq:constrained} solves \eqref{eq:constrained} if and only if there exists $\eta_1, \ldots, \eta_L \geq 0$ such that $\eta_l \Phi_l({\bf w}^*) = 0$ for all $l = 1, 2, \ldots, L$ and ${\bf w}^* \in \underset{w \in C}{\arg \min} \left \{ \Phi_0({\bf w}) + \sum \limits_{l=1}^L \eta_k \Phi_l({\bf w}) \right \}$. 
\end{prop}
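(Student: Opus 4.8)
The plan is to prove the two implications separately: the ``if'' (sufficiency) direction is a short direct computation, while the ``only if'' (necessity) direction requires a separating-hyperplane argument in the space of constraint and objective values, with Slater's condition entering at exactly one crucial point.

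For sufficiency, suppose $\eta_1, \ldots, \eta_L \geq 0$ exist with $\eta_l \Phi_l({\bf w}^*) = 0$ for all $l$ and ${\bf w}^* \in \arg\min_{{\bf w} \in C}\{\Phi_0({\bf w}) + \sum_{l=1}^L \eta_l \Phi_l({\bf w})\}$. Take any feasible ${\bf w}$, so ${\bf w} \in C$ and $\Phi_l({\bf w}) \leq 0$. The Lagrangian-minimization property gives $\Phi_0({\bf w}^*) + \sum_l \eta_l \Phi_l({\bf w}^*) \leq \Phi_0({\bf w}) + \sum_l \eta_l \Phi_l({\bf w})$; the left side collapses to $\Phi_0({\bf w}^*)$ by complementary slackness, and on the right $\sum_l \eta_l \Phi_l({\bf w}) \leq 0$ since $\eta_l \geq 0$ and $\Phi_l({\bf w}) \leq 0$. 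Hence $\Phi_0({\bf w}^*) \leq \Phi_0({\bf w})$, so ${\bf w}^*$ is optimal. This step is routine.

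For necessity, let $p^* = \Phi_0({\bf w}^*)$ be the optimal value and introduce the perturbation set $A = \{({\bf u}, t) \in \mathbb{R}^L \times \mathbb{R} : \exists {\bf w} \in C \text{ with } \Phi_l({\bf w}) \leq u_l \text{ for all } l \text{ and } \Phi_0({\bf w}) \leq t\}$. Convexity of $A$ follows from convexity of $\Phi_0, \ldots, \Phi_L$ and of $C$. Optimality of ${\bf w}^*$ means $A$ is disjoint from the open ray $B = \{({\bf 0}, s) : s < p^*\}$, so separating these two convex sets yields a nonzero $(\bm \mu, \nu) \in \mathbb{R}^L \times \mathbb{R}$ that, because $A$ is upward-closed in each coordinate, must satisfy $\bm \mu \geq {\bf 0}$ and $\nu \geq 0$. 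I would then invoke Slater's condition to rule out $\nu = 0$: if $\nu = 0$, the separating inequality evaluated at the strictly feasible point ${\bf w}'$ would force $\sum_l \mu_l \Phi_l({\bf w}') \geq 0$, contradicting $\Phi_l({\bf w}') < 0$ together with $\bm \mu \neq {\bf 0}$. Normalizing $\nu = 1$ and setting $\eta_l = \mu_l$ gives $\Phi_0({\bf w}) + \sum_l \eta_l \Phi_l({\bf w}) \geq p^*$ for all ${\bf w} \in C$; evaluating at ${\bf w}^*$ forces $\sum_l \eta_l \Phi_l({\bf w}^*) = 0$, which with $\eta_l \geq 0$ and $\Phi_l({\bf w}^*) \leq 0$ yields complementary slackness term by term, and the same inequality then exhibits ${\bf w}^*$ as a minimizer of the Lagrangian over $C$.

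The main obstacle is the necessity direction, and within it the step of extracting a \emph{usable} multiplier vector — specifically, establishing that the objective multiplier $\nu$ is strictly positive so that it can be normalized to $1$. This is exactly where Slater's constraint qualification is needed, and it is the crux distinguishing this KKT-type characterization from the weaker Fritz--John conditions that hold without any qualification. Once the multipliers are in hand, the concluding identity ${\bf w}^* \in \arg\min_{{\bf w} \in C}\{\Phi_0({\bf w}) + \sum_l \eta_l \Phi_l({\bf w})\}$ is immediate and is precisely the form characterized by Proposition \ref{prop:general}. Alternatively, the entire necessity argument can be obtained by specializing the Lagrangian duality theorem of Mordukhovich and Nam (2013) to this setting, which would shorten the write-up at the cost of being less self-contained.
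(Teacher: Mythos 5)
Your proof is correct, but it takes a genuinely different route from the paper. The paper's proof is a short translation argument: it cites Theorem 4.18 of Mordukhovich and Nam (2013), which (under Slater's condition) characterizes optimality by complementary slackness together with the subdifferential inclusion $0 \in \partial \Phi_0({\bf w}^*) + \sum_{l=1}^L \eta_l \partial \Phi_l({\bf w}^*) + N_C({\bf w}^*)$, where the last term is the normal cone to $C$; it then uses the subdifferential sum rule (Lemma 1 of the paper) to rewrite this as a single subdifferential of the Lagrangian, and finally invokes Proposition A1 to convert the resulting variational inequality into the statement that ${\bf w}^*$ minimizes the Lagrangian over $C$. You instead prove the result from scratch: sufficiency by the standard complementary-slackness computation, and necessity by the classical value-space argument --- separating the convex perturbation set $A$ from the ray $B$ below the optimal value, using upward-closedness of $A$ for the sign conditions, and using Slater's point precisely to rule out a vanishing objective multiplier $\nu$ before normalizing. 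Your argument is self-contained modulo the finite-dimensional separation theorem, avoids subdifferential calculus entirely (and hence the sum rule and Proposition A1), and makes the role of the constraint qualification explicit; the paper's argument is shorter, reuses machinery already established earlier in its appendix, but is only as self-contained as the cited theorem. Your closing remark that one could instead specialize a duality theorem of Mordukhovich and Nam is essentially what the paper does. One small wording point: ``normalizing $\nu = 1$ and setting $\eta_l = \mu_l$'' should read $\eta_l = \mu_l/\nu$, i.e., you divide the separating inequality through by $\nu > 0$; this is clearly what you intend and does not affect correctness.
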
 
\begin{proof} 
We assumed that Slater's condition holds. Thus, Theorem 4.18 of Mordukhovich and Nam (2013) says that ${\bf w}^*$ is optimal for \eqref{eq:constrained} if and only if there exist multipliers $\eta_1, \ldots, \eta_L \geq 0$ such that $\eta_l \Phi_l({\bf w}^*) = 0$ for all $l = 1, 2, \ldots, L$ and
\begin{equation} 0 \in \partial \Phi_0({\bf w}^*) + \sum \limits_{l=1}^L \eta_l \partial \Phi_l({\bf w}^*) + \left \{ {\bf g} \in \mathbb{R}^N:  {\bf g}^T {\bf w}^* \geq {\bf g}^T {\bf w} ~ \forall ~ {\bf w} \in C \right \}.\label{eq:subdiff}
\end{equation} 
Thus, by Lemma \ref{lem:sub}, \eqref{eq:subdiff} is equivalent to: 
\begin{equation} 
\exists {\bf g} \in \partial \left (\Phi_0 + \sum \limits_{l=1}^L \eta_l \Phi_l \right )({\bf w}^*) \text{~ such that ~} {\bf g}^T ({\bf w} - {\bf w}^*) \geq 0 \quad { \mbox{ for all } {\bf w} \in C}. \label{eq:subdiff2}
\end{equation} 
Finally, it follows from Proposition \ref{prop:general} that \eqref{eq:subdiff2} holds if and only if ${\bf w}^* \in \underset{{\bf w} \in C}{\arg \min} \left \{ \Phi_0({\bf w}) + \sum \limits_{l=1}^L \eta_l \Phi_l({\bf w}) \right \}$. 
\end{proof} 

\section{Proof of Theorem \ref{thm:single-extend}}
\label{sec:proof-single}

Let ${\bf w}^* \in \Omega$. 
Since $\Phi({\bf w}) = \phi({\cal I}_f({\bf w}))$ is convex
and $\Omega$ is a closed convex set, from Proposition \ref{prop:general},  ${\bf w}^* \in \underset{{\bf w} \in \Omega}{\arg \min} ~ \Phi({\bf w})$ if and only if
\begin{equation} \exists {\bf g} \in \partial \Phi({\bf w}^*) \text{ such that } {\bf g}^T ({\bf w} - {\bf w}^*) \geq 0, \text{ for all } {\bf w} \in \Omega. \label{eq:nobasis} 
\end{equation}  
Suppose that \eqref{eq:nobasis} holds. Since ${\bf e}_i \in \Omega$ for all $i = 1, 2, \ldots, N$, we have that \eqref{eq:ineq-extend} holds. Now suppose that \eqref{eq:ineq-extend} holds. Then, because all ${\bf w} \in \Omega$ have $w_i \geq 0$ for all $i = 1, 2, \ldots, N$, \eqref{eq:ineq-extend} implies that there exists ${\bf g} \in \partial \Phi({\bf w}^*)$ such that
$ \sum \limits_{i=1}^N w_i {\bf g}^T ({\bf e}_i - {\bf w}^*) \geq 0, \text{ for all } {\bf w} \in \Omega.$
Observing that ${\bf g}^T ({\bf w} - {\bf w}^*) =  \sum \limits_{i=1}^N w_i {\bf g}^T ({\bf e}_i - {\bf w}^*)$ completes the proof. \qedsymbol{}

\section{Proof of Lemma \ref{prop:eopt}}
 
We will start by establishing properties of $(-\Phi)({\bf w}) = \lambda_{min}(\mathcal{I}_f({\bf w}))$, as this allows us to take advantage of existing theoretical results about the minimum eigenvalue function $\lambda_{min}({\bf M})$. 

The function $(-\Phi)({\bf w})$ is not differentiable at every point in $\mathbb{R}^N$. Furthermore, the notion of a subdifferential does not apply to $(-\Phi)({\bf w})$, as $(-\Phi)({\bf w})$ is a \emph{concave} rather than a convex function. However, it has a \emph{Clarke subdifferential} \citep{clarke1983optimization}, which generalizes the notion of the gradient to the class of locally Lipschitz continuous functions. The Clarke subdifferential of a locally Lipschitz continuous function $h({\bf w})$ on $\mathbb{R}^N$ is defined as: 
$$ \partial^C h({\bf w}) = \text{co} \left ( \left \{ {\bf v} \in \mathbb{R}^N: \exists ~\{{\bf w}_k\}_{k=1}^\infty \text{ s.t.} \underset{k \rightarrow \infty}{\lim} {\bf w}_k \text{ exists},  \nabla h({\bf w}_k) \text{ exists, and } \underset{k \rightarrow \infty}{\lim} \nabla h({\bf w}_k) = {\bf v} \right \} \right ), $$ 
where $\text{co}(S)$ is the convex hull of the set $S$, i.e. the intersection of all convex sets containing $S$. 

We can characterize the Clarke subdifferential $\partial^C (-\Phi)({\bf w})$ by observing that $(-\Phi)({\bf w})$ is the composition of the non-differentiable concave function $\lambda_{min}$ with the linear function $\mathcal{I}_f$. Thus, it follows from the Clarke subdifferential chain rule (Theorem 2.3.10 in \citealt{clarke1983optimization}) that 
${\bf g}_c \in \partial^c (-\Phi)({\bf w}^*)$ if and only if there exists ${\bf M} \in \partial^c \lambda_{\min}(\mathcal{I}_f({\bf w}^*))$ such that for any ${\bf w} \in \mathbb{R}^N$, 
\begin{equation} {\bf g}_c^T {\bf w} = \text{trace}\left ({\bf M} \left ( \sum \limits_{i=1}^N w_i {\bf z}_f({\bf u}_i){\bf z}^T_f({\bf u}_i) \right ) \right ). \label{eq:lambdamin1} 
\end{equation} 
Furthermore, Corollary 10 of \citet{lewis1999nonsmooth} says that 
\begin{align} \partial^c \lambda_{min} (\mathcal{I}_f({\bf w}^*)) 
&= \left \{ \sum \limits_{j=1}^{r^*} a_j {\bf v}_j^* [{\bf v}_j^*]^T: a_j \geq 0, \sum \limits_{j=1}^{r^*} a_j = 1 \right \}, \label{eq:lambdamin2}
\end{align} 
recalling that in the statement of Lemma \ref{prop:eopt} we defined  ${\bf v}_1, \ldots, {\bf v}_{r^*}$ to be an arbitrary set of ${r^*}$ linearly independent unit eigenvectors associated with $\lambda_{min}(\mathcal{I}_f({\bf w}^*))$ where ${r^*}$ is the geometric multiplicity of $\lambda_{min}(\mathcal{I}_f({\bf w}^*))$. It follows from the definition of $\mathcal{I}_f({\bf w}^*)$ in \eqref{Info1} and \eqref{eq:lambdamin1}--\eqref{eq:lambdamin2} that if ${\bf g}_c \in \partial^c (-\Phi)({\bf w}^*)$, then there exists $a_1, \ldots, a_{r^*} \geq 0$ such that $\sum \limits_{j=1}^{r^*} a_j = 1$ and 
${\bf g}_c^T ({\bf e}_i - {\bf w}^*) = \sum \limits_{j=1}^{r^*} a_j([{\bf v}_j^*]^T {\bf z}_f({\bf u}_i))^2 - \lambda_{\min}(\mathcal{I}_f({\bf w}^*)), \text{ for all } i = 1, 2, \ldots, N. $
Furthermore, $\partial \Phi({\bf w}^*) = \partial^c \Phi({\bf w}^*) = - \left [ \partial^c (-\Phi) ({\bf w}^*) \right ],  $
where the first equality follows from Proposition 2.2.7 of \citep{clarke1983optimization}, and the second equality follows from Proposition 2.3.1 of \citep{clarke1983optimization}. Therefore, for any ${\bf g} \in \partial \Phi({\bf w}^*)$, we know that $-{\bf g} \in \partial^c (-\Phi)({\bf w}^*)$. Thus, there exists $a_1, \ldots, a_{r^*} \geq 0$ such that $\sum \limits_{j=1}^{r^*} a_j = 1$ and 
$ -{\bf g}^T ({\bf e}_i - {\bf w}^*) = \sum \limits_{j=1}^{r^*} a_j([{\bf v}_j^{*}]^T {\bf z}_f({\bf u}_i))^2 - \lambda_{\min}(\mathcal{I}_f({\bf w}^*)), \text{ for all } i = 1, 2, \ldots, N.$ 
~~~\qedsymbol{}

\section{Proof of Theorem \ref{Th1}}
Since we assumed that there exists ${\bf w} \in \Omega$ satisfying $\text{Eff}_k({\bf w}) > m_k$ for all $k=2, \ldots, K$, we have that Slater's condition holds for problem \eqref{EffProb2}. Furthermore, for all $k = 1, 2, \ldots, K$, $\Phi_k({\bf w}) =  \phi_k(\mathcal{I}_{f_k}({\bf w}))$ is a convex function
and $\Omega$
is a convex set. Thus, it follows from Proposition \ref{prop:general-constrained} that ${\bf w}^* \in \Omega$ solves \eqref{EffProb2} if and only if there exists $\eta_2, \ldots, \eta_K \geq 0$ such that $\eta_k (\Phi_k({\bf w}^*) - h_k(m_k)) = 0$ for all $k = 2, 3, \ldots, K$ and 
${\bf w}^* \in \underset{{\bf w} \in \Omega}{\arg \min}  \left \{ \Phi_1({\bf w}) + \sum \limits_{k=2}^K \eta_k (\Phi_k({\bf w}) - h_k(m_k)) \right \}.$ Observing that 
$\sum \limits_{k=2}^K \eta_k h_k(m_k)$ does not depend on ${\bf w}$ completes the proof. 
\qedsymbol{}

\section{Proof of Theorem \ref{Th2}}
We first confirm that the following restatement of \eqref{MaximinProb5}, 
\begin{equation}
\underset{{\bf w} \in \Omega,t \geq 0}{\min} ~~t \qquad 
\mbox{subject to:}~ \Phi_k({\bf w})  \le h_k(1/t)  , ~k=1, \ldots, K,
\label{MaximinProb6}
\end{equation}
satisfies the conditions in Proposition \ref{prop:general-constrained}. Since $\Omega$ and $\mathbb{R}$ are closed convex sets, $\Omega \times \mathbb{R}$ is a closed convex set. Furthermore, for all $k = 1, 2, \ldots, K$, $\Phi_k({\bf w}) =  \phi_k(\mathcal{I}_{f_k}({\bf w}))$ is a convex function.
We also know that for all $k = 1, 2, \dots, K$, $h_k(1/t)$ in \eqref{eq:defh} is a concave function of $t$ for all $k = 1, 2, \ldots, K$, as $-\lambda_{\min}({\bf M})/t$ is a concave function of $t$ for any positive definite matrix ${\bf M}$, $t$ is a linear function, and $q_k \log(t)$ is a concave function. To show that Slater's condition holds, we need to find ${\bf w}' \in \Omega$ and $t' > 0$ with $\Phi_k({\bf w}') < h_k(1/t')$ for all $k = 1, 2, \ldots, K$. It follows from the definition of the efficiency functions $\text{Eff}_k({\bf w}')$ in Table \ref{tab:eff} that $\Phi_k({\bf w}') < h_k(1/t')$ if and only if $\text{Eff}_k({\bf w}') > 1/t'$, and that $\text{Eff}_k({\bf w}') > 0$ for all ${\bf w}' \in \Omega$. Thus, choosing ${\bf w}' = \underset{{\bf w} \in \Omega}{\arg \min} ~\Phi_1({\bf w})$ and $t' = 2/ \left ( \underset{k = 1, 2, \ldots,K}{\min} ~ \text{Eff}_k ({\bf w'}) \right)$ satisfies Slater's condition. 
 
We can now apply Proposition \ref{prop:general-constrained} to  \eqref{MaximinProb6} to yield the following result: a feasible solution $({\bf w}^*, t^*)$  for \eqref{MaximinProb6} solves problem \eqref{MaximinProb6} if and only if there exists $\nu, \eta_1, \ldots, \eta_K \geq 0$ satisfying: 
\begin{gather} \nu t^* = 0, \label{eq:proofcond1} \\ 
\eta_k (\Phi_k({\bf w}^*) - h_k(1/t^*)) = 0 \text{ for all }k = 1, 2, \ldots, K, \label{eq:proofcond2} \\ 
({\bf w}^*, t^*) \in \underset{{\bf w} \in \Omega, t \in \mathbb{R}}{\arg \min} \left \{t - \nu t + \sum \limits_{k=1}^K \eta_k (\Phi_k({\bf w}) - h_k(1/t)) \right \}. \label{eq:proofcond3} 
\end{gather} 
The optimization problem in \eqref{eq:proofcond3} is separable. Thus, \eqref{eq:proofcond3} can be rewritten as 
\begin{gather} 
{\bf w}^* \in \underset{{\bf w} \in \Omega}{\arg \min} \left \{ \sum \limits_{k=1}^K \eta_k \Phi_k({\bf w}) \right \}, \label{eq:proofcond4} \\ 
t^* \in \underset{t \in \mathbb{R}}{\arg \min} \left \{ t - \nu t - \sum \limits_{k=1}^K \eta_k h_k(1/t) \right \}.\label{eq:proofcond5}
\end{gather} 
Since $g(t) = t - \nu t - \sum \limits_{k=1}^K \eta_k h_k(1/t)$ is a convex function, we can rewrite \eqref{eq:proofcond5} as 
\begin{equation} 
1 - \nu - \sum \limits_{k=1}^K \eta_k \left [ \frac{d}{dt} h_k(1/t) \Big |_{t = t^*} \right ] = 0. \label{eq:proofcond6} 
\end{equation} 
This means that there exists  $\nu, \eta_1, \ldots, \eta_K \geq 0$ satisfying \eqref{eq:proofcond1}--\eqref{eq:proofcond3} if and only if there exists $\eta_1, \ldots, \eta_K \geq 0$ satisfying \eqref{eq:proofcond2}, \eqref{eq:proofcond4}, and 
\begin{gather} 
1  - \sum \limits_{k=1}^K \eta_k \left [ \frac{d}{dt} h_k(1/t) \Big |_{t = t^*} \right ]  \geq 0, \quad \quad t^* \left ( 1  - \sum \limits_{k=1}^K \eta_k \left [ \frac{d}{dt} h_k(1/t) \Big |_{t = t^*} \right ] \right ) = 0. \label{eq:proofcond8}
\end{gather} 
Since \eqref{eq:proofcond2} is Condition 2 in Theorem \ref{Th2}, and \eqref{eq:proofcond4} is Condition 3 in Theorem \ref{Th2}, it remains to show that \eqref{eq:proofcond8} is equivalent to Condition 1 in Theorem \ref{Th2}.
It suffices to show that $t^* > 0$. Recall that \eqref{MaximinProb6} is equivalent to \eqref{MaximinProb4}, so the optimal value for $t$ in \eqref{MaximinProb6} is the reciprocal of the maximin efficiency attained by the optimal design. Efficiencies are bounded between 0 and 1, so the optimal value for $t$  must be greater than 1, i.e. $t^* > 1$.  \qedsymbol{}

\end{document}